\documentclass[a4paper,cleveref]{lipics-v2021}
\usepackage[utf8]{inputenc}
\usepackage{algorithm}
\usepackage[noend]{algpseudocode}

\usepackage{amsmath,amsthm,amssymb}
\usepackage{tikz}
\hideLIPIcs
\nolinenumbers

\newcommand{\var}{\mathrm{var}}

\newcommand{\logb}{\mathrm{\log\text{-}bound}}
\newcommand{\vset}[1]{\mathcal{#1}}

\newcommand{\datarule}{{\,:\!\!-\,}}
\usepackage{bm}

\title{Size Bounds for CQs Under Acyclic Constraints}
\author{Stefan Mengel}{Univ.~Artois, CNRS, Centre de Recherche en Informatique de Lens (CRIL), Lens, France}{mengel@cril.fr}{https://orcid.org/0000-0003-1386-8784}
{}
\author{Andrei Romashchenko}{LIRMM Univ Montpellier \& CNRS, France}{andrei.romashchenko@lirmm.fr}{https://orcid.org/0000-0001-7723-7880}{}

\authorrunning{S.~Mengel, A.~Romashchenko}

\keywords{database theory, conjunctive queries, query size bounds, polymatroids, information theory}

\ccsdesc{Theory of computation~Database theory}
\ccsdesc{Mathematics of computing~Information theory}

\begin{document}

\maketitle

\begin{abstract}
We study size bounds for conjunctive query (CQ) 
results which in recent years have played a crucial role in database theory. In particular, we compare the so-called entropic bound which is known to be asymptotically tight but not known to be computable, and its computable relaxation called the polymatroid bound,  which is generally not tight. We focus here on conjunctive queries under acyclic functional dependencies. These queries are known to be well-behaved in the sense that, in the case without projections, both bounds coincide. We show that this picture changes when projections are allowed: in this case, even for acyclic functional dependencies, there is in general a polynomial gap between the two bounds. We complement this negative result by showing a special case for which the polymatroid bound is tight under acyclic functional dependencies and projections that is characterized by the position of the output variables in a topological order of the query variables.
\end{abstract}

\section{Introduction}\label{sec:acyclic}

Computing bounds on the size of query results is crucial in databases. In a database systems context, this task is called \emph{cardinality estimation} and plays an important  part in query optimization, see e.g.~\cite{selinger1979access,leis2015good}. There are different types of heuristics that solve it in practical systems with varying quality, and they have a great influence on query evaluation runtimes, see e.g.~\cite{leis2015good,ZhangMKOS25}.
In contrast to these heuristic estimates in systems work, in database theory the focus is on bounds for conjunctive queries that are provably correct. The idea is to compute bounds on the output size based on more or less detailed statistics of the input database. These statistics generally include the cardinalities of the input tables, but also more refined measures like degree information and functional dependencies (FDs).

Size bounds for query answers are not only intriguing by themselves, but have also played a crucial role for the development of algorithms. For example, the AGM-bound~\cite{atserias2013size} has lead to a revolution in join evaluation by the development of worst-case optimal join algorithms~\cite{ngo2018worst,ngo2018worstb}, i.e.~algorithms whose runtime is linear in the size of the biggest possible query answer. The link between bounds and algorithms is even more direct when taking into account functional dependencies in the data or their generalization degree constraints: given statistics on the input database, the so-called \emph{polymatroid bound} for the output size can be computed by solving a linear program. The solution to this program can then be translated into a sequence of operations that an algorithm, called PANDA~\cite{Khamis0S17}, uses to evaluate the query. The computation of the bound thus guides the algorithm whose runtime turns out to be quasi-linear in the bound.

So to speed up the evaluation of queries, it is important that the runtime bounds are as small as possible while still being correct. For join queries with only cardinality constraints on the input database, it is known that the AGM-bound is tight~\cite{atserias2013size}, so for every query $Q$ and every set of cardinality constraints, there is a database $D$ satisfying the cardinality constraints such that the size of output size $Q(D)$ is up to a constant factor equal to the AGM-bound. This is the underlying reason why we know that the worst-case optimal algorithms, whose runtime is proportional to the AGM-bound, are in fact worst-case optimal.

Once we add more refined constraints, the situation becomes less satisfying: when considering degree statistics and, as an important special case, functional dependencies for the computation of upper bounds, we know that the so-called \emph{entropic bound} is a tight bound on the worst-case size of query results~\cite{Khamis0S17,Suciu23}. While this sounds promising, this result comes with a catch: it is not known if the entropic bound is even computable! This is due to deep questions in information theory related to the lack in good understanding of the so-called entropic cone~\cite{Yeung12}. While there has been progress in that direction over the last few years, see e.g.~\cite{zhang1998characterization,makarychev2002new,matus2007infinitely,dougherty2011non,matuvs2016entropy,csirmaz2025exploring,laszlo2026information} 
and the surveys in \cite{yeung2003first,yeung2015facets,chan2011recent},
fully understanding the situation seems to be far off. In database theory, one consequence of the resulting limited understanding of the entropic bound is that, in contrast to other bounds, so far it does not seem to lead to an algorithm for query evaluation.

One way of side-stepping this problem is considering looser bounds with better algorithmic properties. This is in fact what the already mentioned \emph{polymatroid bound} does. It is an approximation of the entropic bound that can be computed with linear programming and leads to a fast evaluation algorithm for conjunctive queries, the PANDA algorithm. However, it is known that the polymatroid bound is in general not tight, i.e.~the size bounds for query results that it gives can be off by a polynomial factor, even for the restricted case of join queries with functional dependencies. As a consequence, the PANDA algorithm is generally \emph{not} worst-case optimal and there is in fact no known worst-case optimal algorithm in the setting. As a result, the complexity of conjunctive queries in the presence of FDs and more general degree constraints is less well understood than with only cardinality constraints. Moreover, unless there are breakthroughs in information theory, we seem to be stuck with this unsatisfying situation.

However, while in general the situation for bounds on query results under degree constraints might be difficult to understand, we can still aim to better understand it for restricted cases. For example, size bounds are better understood for the case in which all degree constraints are simple, i.e.~the dependency is only on single attributes instead of sets~\cite{GottlobLVV12,GogaczT17,abo2016computing,Suciu23}. 
Another class of constraints that has attracted considerable interest is that of \emph{acyclic} degree constraints. Roughly speaking, these are sets of constraints in which dependencies can only appear in a specific direction (see \Cref{sec:prelim} for a formal definition).
These constraints were introduced in~\cite{ngo2018worstb} where it was shown that the entropic bound and the polymatroid bound coincide when restricted to acyclic degree constraints. This also results in worst-case optimal algorithms for join evaluation~\cite{ngo2018worstb,CapelliIS25}. Moreover, for this class of constraints there are efficient sampling algorithms~\cite{WangT26,CapelliIS25}. 

So far, the above positive results have only been shown for queries without any projections, i.e.~all variables in the query are output variables. In practice however, projection is very common. After all, users might often be in a situation in which they do not care about some of the attributes in tables of a database. For example, when studying a large corpus of SPARQL queries, which is more general than what we study here, projection was found in 14.98\% of all queries~\cite{BonifatiMT20}. From a more theoretical perspective, projection is useful to materialize so-called bags in tree decomposition based approaches~\cite{GottlobLS02,GroheM14,Khamis0S17}. So studying the effect of projections on size bounds %
is certainly worthwhile. 
Concretely, the question we answer here is the following: do the entropic bound and the polymatroid bound for conjunctive queries with acyclic degree constraints still coincide for queries with projections?

We answer this question in the negative by constructing a query with acyclic functional dependencies for which the polymatroid bound exceeds the entropic bound by a polynomial factor. Consequently, the polymatroid bound is not tight for such queries, and existing algorithms are not worst-case optimal. This is in contrast to the setting without projections.

Our way of approaching the problem is as follows: we start with a known query with functional dependencies for which the polymatroid bound is known to be not tight from~\cite{Khamis0S16,Suciu23}. The functional dependencies for this query are highly cyclic, so it does not answer our question directly. The key ingredient of our construction is then that we show of way of ``untangling'' the functional dependencies. To this end, for every variable involved in FDs that violate acyclicity, we break the cycles by adding a new variable that takes its role in the FDs. Doing this systematically and adding some more FDs between old and new variables, leads to a new query with a new set of FDs that are acyclic now. Our result then follows by showing that for the example query from~\cite{Khamis0S17,Suciu23}, this construction changes neither the entropic bound nor the polymatroid bound.

We complement these negative findings by showing that if in a query with acyclic FDs the output variables come ``before'' the projected variables in a well-defined sense, then the entropic bound and the polymatroid bound coincide.

\section{Preliminaries}\label{sec:prelim}

\paragraph*{Conjunctive Queries}

We assume that the reader is familiar with the basics on conjunctive queries, see e.g.~\cite{ArenasBLMP21}. 
We write a conjunctive query $Q$ as a comma-separated list   of the form
\begin{align*}
	Q(\vset X) \datarule  R_1 (\vset X_1),  \ldots, R_\ell (\vset X_\ell) 
\end{align*}
where each $R_i$ is a relation symbol of the database schema and ${\vset X}_i $
is the set of variables occurring in the atom $R_i$. %
The terms $R_i(\vset X_i)$ are called \emph{atoms}.
The variables set $\vset X$ is the set of \emph{output variables}, $\var(Q) = \bigcup\limits_{i=1}^\ell \vset X_i$ denotes the set of all variables occurring in the query
(or simply \emph{the variables set of $Q$}), and finally $\{Y_1, \ldots, Y_m \} = \var(Q)\setminus \vset X$ are the \emph{projected variables}.
As usual, $Q(\vset X)$ is interpreted as a conjunction of atoms, with the existential quantifiers over all projected variables, i.e., 
\begin{align*}
	 \exists  (Y_1, \ldots, Y_m) \ R_1 (\vset X_1) \wedge  \ldots \wedge R_\ell (\vset X_\ell).
\end{align*}

Given a database $D$ that contains a relation $R^D$ for every relation symbol $R_i$, we denote by $Q(D)$ the query result with respect to the usual set semantics. We will mostly be interested in the number of query answers $|Q(D)|$, i.e.~the number of tuples in $Q(D)$.

We will consider functional dependencies of the form $\vset W \rightarrow \vset U$ for $\vset W, \vset U\subseteq \var(Q)$. We require that every such functional dependency is \emph{covered} by an atom, i.e.,~there is an atom $R_i(\vset X_i)$ with $\vset U\cup \vset W \subset \vset X_i$. We say that a database $D$ is consistent with the functional dependency $\vset W \rightarrow \vset U$ if in the covering relation $R_i^D$, for every assignment to the variables in $\vset W$, there is at most one extension to the variables in $\vset U$. The database $D$ is consistent with a set $\Sigma$ of functional dependencies if $D$ is consistent with all functional dependencies in $\Sigma$.

\label{directed-graph}
We assign a directed graph $G_\Sigma$ to every set $\Sigma$ of functional dependencies as follows: the vertices are the variables appearing in $\Sigma$, and for every functional dependency $\vset U \rightarrow \vset W$, the graph $G_\Sigma$ has the edges $\{UW\mid U\in \vset U, W\in \vset W\}$. We then call $\Sigma$ \emph{acyclic} if $G_\Sigma$ is a directed acyclic graph.

\begin{example}\label{ex:acyclic}
	Consider the query 
	\begin{align*}
		Q(X_1,X_2, X_3) \datarule R_1(X_1, X_2, Y_1),R_2(X_1, Y_2, Y_3),R_3(X_2, X_3, Y_3)
	\end{align*}
	with the set of FDs $\Sigma $ contaning $X_1\rightarrow X_2Y_1, X_1Y_2\rightarrow Y_3, X_2 \rightarrow X_3Y_3$. The output variables are $X_1, X_2, X_3$. The graph $G_\Sigma$ the contains the edges $X_1 X_2, X_1Y_1, X_1Y_3, Y_2Y_3, X_2X_3, X_2Y_3$. The FDs are acyclic which can be seen e.g.~by the topological order 
	\[
X_1 \prec X_2  \prec  X_3  \prec   Y_1  \prec  Y_2 \prec  Y_3.
	\] 
\end{example}

\paragraph*{Polymatroids and Entropy}

We only introduce some minimal notions from information theory here. The interested reader is referred to~\cite{Yeung12} for more background.

Fix a finite domain $\vset Y$. Then a \emph{set function} on $\vset Y$ is a function $f:2^{\vset Y} \rightarrow \mathbb R_+$. 
We say that $f$ is \emph{monotone} if for all $\vset A \subseteq \vset B \subseteq \vset Y$ we have $f(\vset A) \le f(\vset B)$. The function $f$ is called \emph{submodular} if for all $\vset A, \vset B \subseteq \vset Y$ we have
\begin{align*}
	f(\vset A\cup \vset B) + f(\vset A\cup \vset B) \le f(\vset A) + f(\vset B).
\end{align*}
Finally, $f$ is called a \emph{polymatroid} if it is monotone and submodular and $f(\emptyset) = 0$. 
We denote by $\Gamma_{\vset Y}$ the set of all polymatroids on a set $\vset Y$.\footnote{We remark that the notation $\Gamma_{\vset Y}$ is slightly non-standard. 
Generally, in the literature, the notation $\Gamma_{n}$ is used for polymatroids on sets of size $n$. We chose our notation here because not only the size of $\vset Y$ will be important to us but also which elements contained in it.}

To streamline notation on polymatroids, we make several conventions: for sets $\{A_1, \ldots, A_\ell\}$, we often write $h(A_1\ldots A_\ell)$ instead of $h(\{A_1, \ldots, A_\ell\})$. Similarly, for sets $\vset A_1, \ldots, \vset A_\ell$, we write $h(\vset A_1 \ldots \vset A_\ell)$ for $h\left(\bigcup_{i\in [\ell]} \vset A_i\right)$. We also mix this notation, e.g.~$h(A_1\vset B)$ stands for $h(\{A_1\}\cup \vset B)$ and similar.

One particular interesting class of polymatroids are entropic functions which we define next. Let $Y$ be a random variable on a finite\footnote{Entropy can also be defined for continuous random variables with infinite domains. However, we will only restrict to finite domains since those are sufficient for our purposes.} domain $D$. Then the \emph{entropy} of $Y$ is 
\begin{align*}
	H(Y) = - \sum_{d\in D} \Pr(Y=d) \log(\Pr(Y=d)) 
\end{align*}
with the usual convention $  \Pr(Y=d) \log(\Pr(Y=d))=0$ if $ \Pr(Y=d) = 0$.
Note that, since $0\le \Pr(Y=d) \le 1$, we have that $H(Y)$ is always non-negative.
The entropy of $Y$ has an interpretation as the amount of information that one can get from sampling $Y$, but we do not need anything beyond some properties of entropy here.

Given a set $\vset Y$ and random variables $\{Y_i \mid i \in \vset Y\}$, each with domain $D$, we define a set function $h :2^{\vset Y} \rightarrow \mathbb R_+$ as follows: for every set $\vset A = \{i_1, \ldots, i_\ell \}\subseteq \vset Y$, let $Y_{\vset A}$ be the random variable with domain $D^\ell$ with $\Pr(Y_{\vset A}= (d_1, \ldots d_\ell)) = \Pr\left(Y_{i_1}= d_1, \ldots, Y_{i_\ell} = d_\ell\right)$. Then $h(\vset A) := H(Y_{\vset A})$. Functions $h$ which can be represented that way are called \emph{entropic}. We denote by $\Gamma^*_{\vset Y}$ the set of all entropic functions over the set $\vset Y$.

The following result goes back to Shannon's foundational work on information theory.
\begin{theorem}\cite{Shannon48}\label{thm:shannon}
	All entropic functions are polymatroids.
\end{theorem}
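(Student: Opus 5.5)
To prove \Cref{thm:shannon}, I would fix a set $\vset Y$, jointly distributed random variables $\{Y_i \mid i\in\vset Y\}$ on a finite domain, and the induced entropic function $h(\vset A)=H(Y_{\vset A})$, and then check the three polymatroid axioms one by one. Throughout I will treat $Y_{\vset A}$ as a single random variable on a finite domain, which is legitimate because it has domain $D^{|\vset A|}$. The first axiom, $h(\emptyset)=0$, is immediate: $Y_\emptyset$ is the constant random variable with one outcome of probability $1$, and $1\cdot\log 1=0$.

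For monotonicity, I take $\vset A\subseteq\vset B$ and write $Y_{\vset B}=(Y_{\vset A},Y_{\vset B\setminus\vset A})$. Here I use the chain rule $H(U,V)=H(U)+H(V\mid U)$, where
\[
H(V\mid U)=\sum_u \Pr(U=u)\,H(V\mid U=u).
\]
The chain rule follows by expanding $\log\Pr(U=u,V=v)=\log\Pr(U=u)+\log\Pr(V=v\mid U=u)$ inside the defining sum. Each term $H(V\mid U=u)$ is the entropy of a distribution, so it is nonnegative, and hence $H(V\mid U)\ge 0$. This gives $h(\vset B)=h(\vset A)+H(Y_{\vset B\setminus\vset A}\mid Y_{\vset A})\ge h(\vset A)$.

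For submodularity, I read the intended inequality as $h(\vset A\cup\vset B)+h(\vset A\cap\vset B)\le h(\vset A)+h(\vset B)$; the displayed version in the paper has an obvious typo. I set $U=Y_{\vset A\setminus\vset B}$, $V=Y_{\vset B\setminus\vset A}$ and $W=Y_{\vset A\cap\vset B}$. Applying the chain rule to both sides, the inequality becomes
\[
H(U,V,W)-H(W)\le H(U,W)+H(V,W)-2H(W),
\]
which is exactly the conditional mutual information statement $I(U;V\mid W)\ge 0$. Expanding the definitions,
\[
I(U;V\mid W)=\sum_{u,v,w} p(u,v,w)\log\frac{p(u,v,w)\,p(w)}{p(u,w)\,p(v,w)}.
\]
Conditioning on $W=w$, this is a convex combination, with weights $\Pr(W=w)$, of the Kullback--Leibler divergences between $p(u,v\mid w)$ and $p(u\mid w)\,p(v\mid w)$.

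The one real step, and the main obstacle, is Gibbs' inequality $D(p\Vert q)\ge 0$. I would prove it by Jensen's inequality applied to the concave function $\log$, summing only over the support of $p$:
\[
-D(p\Vert q)=\sum_x p(x)\log\frac{q(x)}{p(x)}\le\log\sum_{x:\,p(x)>0} q(x)\le\log 1=0.
\]
The remaining care is bookkeeping for zero-probability outcomes, which the convention $0\log 0=0$ handles. With Gibbs' inequality in hand, every step above reduces to standard manipulations, so all entropic functions are polymatroids.
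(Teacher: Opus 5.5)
Your proof is correct. It is the standard textbook argument: the chain rule gives monotonicity, and submodularity reduces to $I(U;V\mid W)\ge 0$ via Gibbs' inequality. Absolute continuity is automatic there, since $p(u,v\mid w)>0$ forces $p(u\mid w)\,p(v\mid w)>0$, and you rightly read the paper's submodularity display with $\cap$ in place of the duplicated $\cup$. The paper itself gives no proof and only cites Shannon for this classical fact, so your argument supplies what that citation stands in for.
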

In some cases (see Section~\ref{sct:entropicbound}) we abuse notation by allowing random variables (rather than their indices) as arguments of the set function $h$. 
Thus, given a set of random variables $\{Y_i \mid i \in [n]\}$, we write $h(Y_1,Y_2)$ for $h(\{1,2\})=H(Y_1,Y_2)$. 
Similarly, given random variables $A,B,X,Y$, we write $h(A,B,X,Y)$ for $H(A,B,X,Y)$.

\begin{definition}\label{def:conditional}
Given a polymatroid $h$ and not necessarily disjoint subsets $\vset A, \vset B$, we define $h(\vset A\mid \vset B) := h(\vset A\vset B) - h(\vset B)$.
\end{definition}
\begin{remark}
The definition of $h(\vset A\mid \vset B)$ is motivated by information theory, where the value $h(\vset A\mid \vset B)$ is called the \emph{conditional entropy of $\vset A$ given $\vset B$}.
In matroid theory, a similar transformation of the rank function appears in the contraction of (poly)matroids.
	\end{remark}

The following easy observation will be useful throughout this paper.

\begin{observation}
\label{obs:chainrule}
	Let $h$ be a polymatroid on the set $\vset Y$ and let $\vset A, \vset B, \vset C\subseteq Y$. Then 
	\begin{align*}
		h(\vset A\vset B|\vset C) = h(\vset B|\vset C) + h(\vset A|\vset B \vset C).
	\end{align*}
\end{observation}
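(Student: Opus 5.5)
The identity follows directly from \Cref{def:conditional}: I will expand all three conditional terms into unconditional values of $h$ and observe that the intermediate term telescopes. No property of polymatroids (monotonicity or submodularity) is needed; the statement holds for an arbitrary set function $h$, since only the definition of conditioning is used.

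Concretely, by \Cref{def:conditional} the left-hand side is
\begin{align*}
h(\vset A\vset B\mid \vset C) = h(\vset A\vset B\vset C) - h(\vset C).
\end{align*}
Likewise the two summands on the right-hand side are $h(\vset B\mid \vset C) = h(\vset B\vset C) - h(\vset C)$ and $h(\vset A\mid \vset B\vset C) = h(\vset A\vset B\vset C) - h(\vset B\vset C)$. Here I use the paper's convention that juxtaposition denotes union, so that the conditioning set $\vset B\vset C$ in the last term is $\vset B\cup\vset C$, and the joint set $(\vset A)\cup(\vset B\cup \vset C)$ equals $\vset A\cup\vset B\cup\vset C$. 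Adding the two summands, the terms $h(\vset B\vset C)$ cancel, which leaves $h(\vset A\vset B\vset C) - h(\vset C)$, exactly the left-hand side.

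The only point needing care is that the sets need not be disjoint. Since the definition is stated for not necessarily disjoint sets and every expression is a value of $h$ on a union of sets, overlaps cause no issue: unions are idempotent, so for example $h(\vset B\vset C)$ is well defined even if $\vset B\cap\vset C\neq\emptyset$. There is no real obstacle beyond this bookkeeping.
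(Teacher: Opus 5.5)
Your proof is correct and is the same as the paper's: both expand the three conditional terms via Definition~\ref{def:conditional} and let $h(\vset B\vset C)$ cancel. Your remark that no polymatroid axiom is used, so the identity holds for any set function, is accurate; the paper's proof likewise uses only the definition.
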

\begin{proof}
	Applying Definition~\ref{def:conditional}, we get
	\begin{align*}
		h(\vset B|\vset C) + h(\vset A|\vset B \vset C) &= h(\vset B \vset C) - h(\vset C) + h(\vset A \vset B \vset C) - h(\vset B \vset C)\\
		&= h(\vset A\vset B \vset C) - h(\vset C) \\
		&= h(\vset A\vset B |\vset C).\qedhere
	\end{align*}
\end{proof}
\begin{observation}\label{obs:moreconditioned}
	Let $\vset A, \vset B, \vset C$ be sets of variables and let $h$ be a polymatroid. Then
	\begin{align*}
		h(\vset A\mid \vset B\vset C) \le h(\vset A \mid \vset B).
	\end{align*}
\end{observation}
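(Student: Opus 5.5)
The inequality $h(\vset A\mid \vset B\vset C) \le h(\vset A \mid \vset B)$ should follow directly from submodularity once both sides are unfolded with Definition~\ref{def:conditional}. By that definition,
\begin{align*}
h(\vset A\mid \vset B\vset C) = h(\vset A\vset B\vset C) - h(\vset B\vset C), \qquad h(\vset A\mid \vset B) = h(\vset A\vset B) - h(\vset B).
\end{align*}
So the claim is equivalent to
\begin{align*}
h(\vset A\vset B\vset C) + h(\vset B) \le h(\vset A\vset B) + h(\vset B\vset C).
\end{align*}

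This is the submodularity inequality applied to the two sets $\vset S := \vset A\cup\vset B$ and $\vset T := \vset B\cup\vset C$. For these sets, $\vset S\cup \vset T = \vset A\vset B\vset C$. Submodularity, read in its intended form $f(\vset S\cup\vset T)+f(\vset S\cap\vset T)\le f(\vset S)+f(\vset T)$, therefore gives
\begin{align*}
h(\vset A\vset B\vset C) + h(\vset S\cap\vset T) \le h(\vset A\vset B) + h(\vset B\vset C).
\end{align*}
The paper's displayed definition contains a typo, writing the union twice; the intended form is the one used here.

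The only point needing care is that the sets need not be disjoint, so in general $\vset S\cap\vset T \supseteq \vset B$ rather than $\vset S\cap\vset T=\vset B$. In that case monotonicity gives $h(\vset B)\le h(\vset S\cap\vset T)$. Substituting this into the previous inequality yields exactly the rearranged claim. Subtracting $h(\vset B)+h(\vset B\vset C)$ from both sides then recovers the statement.

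I do not expect any real obstacle. The only subtlety is handling overlapping $\vset A,\vset B,\vset C$ through monotonicity, as above.
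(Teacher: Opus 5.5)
Your proof is correct and is the same argument as the paper's: both unfold Definition~\ref{def:conditional} and apply the submodularity instance $h(\vset A\vset B\vset C)+h(\vset B)\le h(\vset A\vset B)+h(\vset B\vset C)$. You also use monotonicity to bound $h(\vset B)$ by $h\bigl(\vset B\cup(\vset A\cap\vset C)\bigr)$ when the sets overlap, which makes explicit a detail the paper leaves implicit, and you are right that the paper's displayed submodularity definition contains a typo.
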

\begin{proof}
	By submodularity, we have
	\begin{align*}
		h(\vset A \vset B \vset C) + h(\vset B) \le h(\vset A\vset B) +h(\vset B\vset C).
	\end{align*}
	Rearranging according to Definition~\ref{def:conditional} directly yields the claim.
\end{proof}

\paragraph*{Size Bounds for Query Results}

Entropic functions play a key role in size bounds for conjunctive queries under functional dependencies. To this end, consider a CQ 
\[ Q(\vset X) \datarule R_1(\vset X_1), \ldots,R_\ell(\vset X_\ell). \] Let $\vset Y = \bigcup\limits_{i=1}^\ell \vset X_i$.
Throughout the paper, we will let $\vset X$ be the set of output variables and $\vset Y$ be all variables of the query.
 We assume that we have a set $\Pi$ that contains for every relation $R_i$ a cardinality constraint $N_i$ on its size. Moreover, we have a set $\Sigma$ of FDs. We say that a set function $h:2^{\vset Y}\rightarrow \mathbb R_+$ \emph{satisfies} $\Pi$ and $\Sigma$, in symbols $h\models (\Pi, \Sigma)$ if for every $\vset X_i$ we have $h(\vset X_i) \le \log(N_i)$ and for every FD $\vset U \rightarrow \vset V$ in $\Sigma$, we have $h(\vset V|\vset U) = 0$.

Let $K$ be a class of set functions over $\vset Y$. Then we define the quantity 
\begin{align*}
	\logb_K(\vset X, \Gamma,\Sigma) := \sup_{h\in K: h\models (\Pi, \Sigma)} h(\vset X).
\end{align*}
In what follows we discuss two important special cases of this bound: $K = \Gamma^*_{\mathcal Y}$ and $K=\Gamma_{\mathcal Y}$.

\begin{theorem}[Entropic Bound~\cite{GottlobLVV12}]
\label{th:5}
	Let $Q(\vset X)$ be a conjunctive query with the set of variables~$\vset Y$, with cardinality constraints $\Pi$ and FDs $\Sigma$.
	If $D$ is a database with $|R_i^D| \le N_i$ for every atom $R_i$ of $Q(\vset X)$, then
	\(
		|Q(D)| \le 2^{\logb_{\Gamma^*_{\vset Y}}(\vset X, \Pi, \Sigma)}.
	\)
\end{theorem}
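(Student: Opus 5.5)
The plan is to prove the entropic bound by the standard uniform-distribution argument. Fix a database $D$ with $|R_i^D| \le N_i$ for every $i$ that is consistent with $\Sigma$; consistency is implicitly required, otherwise the FD constraints play no role. If $Q(D)$ is empty there is nothing to show. Otherwise, for every output tuple $t \in Q(D)$ we choose one witness: a full assignment $\sigma_t : \vset Y \to \mathrm{dom}$ that extends $t$ and satisfies every atom, i.e.\ $\sigma_t(\vset X_i) \in R_i^D$ for all $i$. Such a witness exists by the existential semantics of the projected variables. The map $t \mapsto \sigma_t$ is injective, since $\sigma_t$ restricted to $\vset X$ is $t$.

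Next let $T$ be uniformly distributed on $Q(D)$, and define random variables $Y_j := \sigma_T(j)$ for $j \in \vset Y$. Take $h$ to be the resulting entropic function, so $h \in \Gamma^*_{\vset Y}$.

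We then check the three properties needed.
\begin{itemize}
\item \textbf{Output value.} $Y_{\vset X} = T$ is uniform on $Q(D)$, so $h(\vset X) = \log |Q(D)|$.
\item \textbf{Cardinality constraints.} $Y_{\vset X_i}$ is supported on $R_i^D$, and a random variable's entropy is at most the logarithm of its support size. Hence $h(\vset X_i) \le \log |R_i^D| \le \log N_i$.
\item \textbf{Functional dependencies.} Take an FD $\vset U \to \vset V$ in $\Sigma$, covered by an atom $R_i$ with $\vset U \cup \vset V \subseteq \vset X_i$. Every value of $Y_{\vset X_i}$ lies in $R_i^D$. By consistency, the value of $Y_{\vset U}$ determines the value of $Y_{\vset V}$, so $Y_{\vset V}$ is a function of $Y_{\vset U}$. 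Therefore $h(\vset V \vset U) = h(\vset U)$, i.e.\ $h(\vset V \mid \vset U) = 0$.
\end{itemize}

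Thus $h \models (\Pi, \Sigma)$, and hence $\log|Q(D)| = h(\vset X) \le \logb_{\Gamma^*_{\vset Y}}(\vset X, \Pi, \Sigma)$. Exponentiating gives the claim.

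There is no real obstacle. The only points needing care are the choice of a single witness per output tuple, so that the distribution on $\vset X$ stays uniform despite the projections, and the two elementary entropy facts: $H \le \log(\text{support size})$, and $H(f(Z) \mid Z) = 0$ for a function $f$ of $Z$.
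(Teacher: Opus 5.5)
Your proof is correct and is the standard argument: the paper does not prove this statement itself but imports it from \cite{GottlobLVV12}, where the bound is obtained the same way, by taking the uniform distribution on $Q(D)$, lifted to one chosen witness per output tuple, and checking the cardinality and FD constraints on the resulting entropic function. You are also right to make explicit that $D$ must be consistent with $\Sigma$, which the paper's formulation leaves out; without that hypothesis the statement is in fact false, since a database violating an FD can have more answers than the bound allows, so the issue is stronger than the FDs merely ``playing no role''.
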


It is known that the entropic bound is tight in an asymptotic sense~\cite{GogaczT17}. However, unfortunately, it is not known if $\logb_{\Gamma_{\vset Y}}(\vset X, \Pi, \Sigma)$ is even computable, so already in~\cite{GottlobLVV12}, the following approximation of the entropic bound was formulated.

\begin{theorem}[Polymatroid Bound~\cite{GottlobLVV12}]
\label{th:6}
	Let $Q(\vset X)$ be a conjunctive query  with the set of variables $\vset Y$, with cardinality constraints $\Pi$ and FDs $\Sigma$.
	 If $D$ is a database with $|R_i^D| \le N_i$ for every atom $R_i$ of $Q(\vset X)$, then
	\(
		|Q(D)| \le 2^{\logb_{\Gamma_{\vset Y}}(\vset X, \Pi, \Sigma)}.
	\)
\end{theorem}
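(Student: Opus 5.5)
The plan is to derive Theorem~\ref{th:6} directly from the Entropic Bound (Theorem~\ref{th:5}) together with Shannon's result (Theorem~\ref{thm:shannon}). The whole argument amounts to showing that the entropic supremum is at most the polymatroid supremum, because the latter is taken over a larger feasible set.

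First, I compare the two classes of set functions. By Theorem~\ref{thm:shannon}, every entropic function on $\vset Y$ is a polymatroid, so $\Gamma^*_{\vset Y}\subseteq \Gamma_{\vset Y}$. The constraint $h\models(\Pi,\Sigma)$ is the same condition in both bounds: $h(\vset X_i)\le \log N_i$ for every atom, and $h(\vset V\mid\vset U)=0$ for every FD in $\Sigma$. Hence the feasible set for the entropic bound is a subset of the feasible set for the polymatroid bound. A supremum over a subset cannot exceed the supremum over the superset, so
\[
\logb_{\Gamma^*_{\vset Y}}(\vset X,\Pi,\Sigma)\le \logb_{\Gamma_{\vset Y}}(\vset X,\Pi,\Sigma).
\]

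Second, I take a database $D$ with $|R_i^D|\le N_i$ for every atom. By Theorem~\ref{th:5}, $|Q(D)|\le 2^{\logb_{\Gamma^*_{\vset Y}}(\vset X,\Pi,\Sigma)}$. Since $t\mapsto 2^t$ is monotone, the inequality from the first step gives the claimed bound. The statement implicitly assumes $D$ is consistent with $\Sigma$, as Theorem~\ref{th:5} does. If the entropic feasible set were empty the bound would be vacuous, but it is not: the uniform distribution on $Q(D)$ extended to witnesses is always feasible. I would mention this only in passing.

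There is no real obstacle; the only point needing care is that the two bounds use literally the same constraint set $\Pi,\Sigma$, so the inclusion of classes transfers to the suprema. If one instead wanted a self-contained proof not relying on Theorem~\ref{th:5}, I would proceed as follows:
\begin{itemize}
\item Take $h$ to be the entropy of the uniform distribution over the full join $\{\,t\in \prod\text{domains} : t|_{\vset X_i}\in R_i^D\ \forall i\,\}$. To make the output marginal uniform, restrict to one chosen witness per output tuple.
\item Check that $h(\vset X)=\log|Q(D)|$.
\item Check that $h(\vset X_i)\le\log|R_i^D|$, since the support of each marginal lies in $R_i^D$.
\item Check that the FDs give zero conditional entropy, because the support respects them.
\item Conclude via Theorem~\ref{thm:shannon} that this $h$ is a feasible polymatroid.
\end{itemize}
The mildly delicate step there is the choice of one witness per output tuple, which is needed so that $h(\vset X)$ equals $\log|Q(D)|$ exactly.
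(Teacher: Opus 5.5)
Your proof is correct and is essentially the paper's own justification. The paper cites this result and remarks right afterwards that every entropic function is a polymatroid (Theorem~\ref{thm:shannon}) and both bounds use the identical constraint $h\models(\Pi,\Sigma)$, so $\logb_{\Gamma^*_{\vset Y}}(\vset X,\Pi,\Sigma)\le\logb_{\Gamma_{\vset Y}}(\vset X,\Pi,\Sigma)$ and the statement follows from Theorem~\ref{th:5}; your explicit requirement that $D$ be consistent with $\Sigma$ (left implicit in the paper's formulation) is a correct and useful clarification, and your self-contained sketch via a uniform distribution on one witness per output tuple is the standard argument behind Theorem~\ref{th:5}.
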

Note that the only difference between the entropic bound and the polymatroid bound is that in the former we take the supremum over all entropic functions while in the latter we maximize over the strictly bigger set of all polymatroids. As a consequence, we get that we always have $\logb_{\Gamma^*_{\vset Y}}(\vset X, \Pi, \Sigma) \le \logb_{\Gamma_{\vset Y}}(\vset X, \Pi, \Sigma)$.

\section{Separating the Entropic and Polymatroid Bound}

In this section we will show our main result that in general the entropic bound can be smaller by a polynomial factor than the polymatroid bound even for the case of acyclic FDs.

\begin{theorem}\label{thm:acyclicgap}
	There is a conjunctive query $Q(\vset X)$ with variable set $\vset Y$, 
	a set of cardinality constraints $\Pi$ and a set of acyclic FDs $\Sigma$ such that 
	\[
	\logb_{\Gamma^*_{\vset Y}}(\vset X, \Pi, \Sigma) < \logb_{\Gamma_{\vset Y}}(\vset X, \Pi, \Sigma).
	\] 
\end{theorem}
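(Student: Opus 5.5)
The plan follows the strategy announced in the introduction. We start from the known query with (cyclic) functional dependencies from~\cite{Khamis0S16,Suciu23}, for which the polymatroid bound strictly exceeds the entropic bound. There the gap is witnessed by a non-Shannon inequality, typically the Zhang--Yeung inequality or an Ingleton-type argument on four variables $A,B,X,Y$ in which the FDs force a specific structure. Call this query $Q_0$ with variables $\vset Y_0$, constraints $\Pi_0$ and FDs $\Sigma_0$, so that
\[
\logb_{\Gamma^*_{\vset Y_0}}(\vset X_0,\Pi_0,\Sigma_0) < \logb_{\Gamma_{\vset Y_0}}(\vset X_0,\Pi_0,\Sigma_0).
\]
We then transform $Q_0$ into a query $Q$ with acyclic FDs $\Sigma$ and show that neither bound changes.

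\textbf{Untangling construction.} Fix an arbitrary linear order on $\vset Y_0$. Whenever an FD $\vset U\rightarrow\vset W$ has an edge pointing backwards in this order, we replace each variable $V\in\vset W$ on such an edge by a fresh copy $V'$. The copy takes the role of $V$ in that FD, so the FD becomes $\vset U\rightarrow \vset W'$, and we place $V'$ after everything in the order. We then tie copy and original together with FDs that point forward, i.e.\ $V\rightarrow V'$, covered by a new binary atom $S_V(V,V')$ whose cardinality bound equals that of the domain of $V$. Because $V'$ does not determine $V$, we additionally make $V'$ and $V$ interchangeable inside the atoms. To do this we add the copies to the covering atoms, and we put the copies among the projected variables. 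This second point is crucial, and it is exactly where projections are needed.

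\textbf{Comparing the two bounds.} Each bound on $Q$ is compared to the corresponding bound on $Q_0$ in both directions.
\begin{itemize}
\item \emph{Lifting to $Q$.} Any polymatroid (respectively entropic function) for $Q_0$ lifts to $Q$ by setting the copy $V'$ equal to $V$, i.e.\ $h'(\vset A)=h(\vset A$ with copies replaced by originals$)$. Duplicating a random variable preserves entropicity and the polymatroid axioms, and all constraints stay satisfied. Hence each bound for $Q$ is at least the corresponding bound for $Q_0$.
\item \emph{Pushing down to $Q_0$.} Given $h$ feasible for $Q$, we need to show that $h(\vset X)$ is at most the optimum for $Q_0$. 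We would try to prove this only for the entropic bound directly, and to compute the polymatroid bound of $Q$ explicitly by exhibiting the same optimal polymatroid via lifting. For the entropic side, the approach is to rerun the non-Shannon inequality proof used for $Q_0$ with copies substituted. The inequality is applied to the variables $A,B,X,Y$ where needed, and the missing FD information ($V'$ does not determine $V$) is compensated by the conditional terms $h(V'\mid V)=0$ together with the atom cardinality bounds. Symmetrically, the polymatroid upper bound for $Q$ is just the Shannon LP, and an optimal feasible point is the lifted polymatroid from the first item.
\end{itemize}

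\textbf{Main obstacle.} The hard step is the entropic upper bound for $Q$. One must check that the proof of the upper bound for $Q_0$, which uses FDs in both directions, survives when some FDs only point to copies. I would design the copy FDs so that every inequality step in the $Q_0$ argument uses FDs only in the forward direction. Concretely, this means conditioning on the originals before invoking the dependency, and choosing the order so that the variables on the left-hand side of the non-Shannon inequality come first. If a step genuinely needs $V'\rightarrow V$, I would instead add a copy for the left-hand side as well and adjust the cardinality constraints so that the lost information is bounded by a slack term that can be made smaller than the gap. Since the gap for $Q_0$ is a constant times $\log N$, this still gives the strict inequality in the theorem.
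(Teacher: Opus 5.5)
\textbf{There is a genuine gap.} Your untangling step is essentially the paper's: you redirect backward heads to fresh copies $V'$, add $V\rightarrow V'$ covered by a binary atom, and lift feasible functions by identifying $V'$ with $V$. However, you make the opposite choice of which variables are projected, and with your choice the separation fails.

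If the originals stay in the output and the copies are projected, then every redirected head variable is a projected copy, and it can simply be made constant. Take the Zhang--Yeung instance with order $A,B,X,Y,C$. You get $AB\rightarrow XYC$, $AXY\rightarrow B'C$, $BXY\rightarrow A'C$, $AC\rightarrow B'X'Y'$, $XC\rightarrow A'B'Y'$, $YC\rightarrow A'B'X'$, together with $V\rightarrow V'$. Now let $X,Y,C,A',B',X',Y'$ be constant, and let $A,B$ be independent and uniform on $N^3$ values. All FDs and all cardinality constraints hold; adding constant copies to atoms changes nothing. This entropic function has $h(ABXYC)=6\log N$. On the other hand, every feasible polymatroid satisfies $h(ABXYC)=h(AB)\le h(AX)+h(BX)\le 6\log N$. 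So both bounds equal $6\log N$ and there is no gap. This example also refutes your claim that the lifted polymatroid is an optimal point of the Shannon LP. You did not need that claim anyway: only the lower bound of $4\log N$ matters.

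The missing idea is to put the copies into the output and project away the originals; the paper's output is $A'B'X'Y'C$. Then substituting $V$ by $V'$ also lowers the objective, and this is what pays for the weakened FDs. The paper starts from the known inequality bounding $11h(ABXYC)$ by cardinality terms plus six conditional terms, and replaces $A,B,X,Y$ by their copies one at a time.

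\begin{itemize}
\item If $h(Z'\mid Z)=0$, then $h(Z\vset L\mid\vset M)-h(Z'\vset L\mid\vset M)=h(Z\mid Z'\vset L\vset M)$. So each substitution lowers each affected term by a compensation term.
\item The objective has coefficient $11$. In each step this is at least the total coefficient of the affected right-hand terms (these totals are $5,9,3,3$).
\item Since $h(\vset L\mid Z\vset M)\le h(\vset L\mid Z'\vset M)$, the objective's compensation term is at least as large as every other compensation term.
\end{itemize}

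Hence the left side drops at least as much as the right side. The final inequality contains only conditional terms that vanish under the new FDs, and cardinality terms.

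Your sketch instead tries to compensate via $h(V'\mid V)=0$. That points the wrong way, since it never lets you recover $V$ from $V'$. Your fallback of adding further copies and slack terms is not an argument.
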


We will show Theorem~\ref{thm:acyclicgap} in the remainder of this section.
The general idea is that we start from an instance with \emph{cyclic} constraints for which we already know that it has a gap between the polymatroid bound and the entropic bound (Section~\ref{sec:introduceZY}). 
Then, we transform this instance in an example with similar properties but \emph{acyclic} constraints. To this end, 
we introduce copies of variables to break the cycles (Section~\ref{sec:untangling}). 
To prove Theorem~\ref{thm:acyclicgap}, 
it suffices to show that untangling the cyclic constraints in a query preserves both the polymatroid bound (we prove this Section~\ref{sct:polymatroid-bound}) and the entropic bound (we prove this in Section~\ref{sct:entropicbound}).

\subsection{Introducing the base instance}\label{sec:introduceZY}

In this section, we begin the proof of Theorem~\ref{thm:acyclicgap} by introducing the query that in~\cite{Khamis0S17,Suciu23} was used to show a gap between the entropic bound and the polymatroid bound. We will use this query to as the basic building block of for the proof of Theorem~\ref{thm:acyclicgap}, as discussed before.
The query $Q_{ZY}$ from from~\cite{Khamis0S17, Suciu23}, called the \emph{Zhang--Yeung query} in~\cite{Khamis0S17}, is defined as follows:
\begin{equation}
\begin{aligned}
	Q_{ZY}(A,B,X,Y,C) \datarule &R_1(X,Y),  R_2(A,X),  R_3(A,Y),\\ &R_4(B,X), R_5(B,Y),  R_6(C), R_7(A,B,X,Y,C).\label{eq:basequery}
\end{aligned}
\end{equation}
The variable set of this query is $\vset Y = \{A,B,C,X,Y\}$.
As in~\cite{Khamis0S17,Suciu23}, we add the cardinality constraints $|R_i|\le N^3$ for $i\in [5]$ and $|R_6|\le N^2$ and the functional dependencies 
\begin{align}
	AB \rightarrow XYC,\\
	AXY \rightarrow BC,\\
	\label{a1} BXY \rightarrow AC,\\
	AC \rightarrow BXY,\\
	\label{a2} XC \rightarrow ABY,\\
	\label{a3} YC \rightarrow ABX
\end{align}
Clearly, these constraints are highly cyclic, but we will take care of this later. 
We transform the cardinality constraints into the constraints
\begin{align}
	\begin{split}
	&h(XY)\le 3\log(N),
	h(AX)\le 3\log(N),
	h(AY)\le 3\log(N),\\
	&h(BX)\le 3\log(N),
	h(BY)\le 3\log(N),
	h(C)\le 2\log(N).
	\end{split}\label{eq:cczy}
\end{align}
The functional dependencies are transformed into the constraints
\begin{align}
	\begin{split}
	&h(XYC|AB)\le 0,
	h(BC|AXY)\le 0,
	h(AC|BXY)\le 0,\\
	&h(BXY|AC)\le 0, 
	h(ABY|XC)\le 0, 
	h(ABX|YC)\le 0.
	\end{split}\label{eq:fdzy}
\end{align}

In~\cite{Suciu23,Khamis0S17}, it was shown that the polymatroid bound for this query and the size constraints and FDs is $4\log(N)$. 
However, the entropic bound is stronger: it implies for this query the bound at most $\frac{43}{11}\log(N)$, 
as shown in in  \cite{Khamis0S17}. Let us recall how this inequality for the entropic bound was proven in  \cite{Khamis0S17}. 
The crucial ingredient of the argument is the following fact from information theory.
\begin{lemma}[see \cite{zhang1998characterization}]\label{lem:zy} For every entropic function $h$ in variables $A,B,X,Y$, we have
\begin{align*}\label{eq:non-shannon}
	0 \ge & 2 h(X) + 2 h(Y) + h(A) - 3 h(XY) - 3 h(XA) - 3 h(YA) - h(XB) \\ &- h(YB)+ h(AB) + 4 h(XYA) + h(XYB).
\end{align*}
\end{lemma}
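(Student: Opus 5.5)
The statement is Lemma~\ref{lem:zy}, the Zhang--Yeung non-Shannon inequality. I would prove it by the standard ``copy lemma'' technique rather than directly. The difficulty is that the inequality is not a consequence of submodularity alone, since it fails for some polymatroids. So the extra ingredient must come from the fact that $h$ is entropic, which lets us build auxiliary random variables.

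\textbf{Step 1: the copy construction.} Let $A,B,X,Y$ be jointly distributed random variables realizing $h$. Introduce a new variable $A'$ with joint law
\[
P(a,a',b,x,y) = \frac{P(a,b,x,y)\,P_{XY}'(a',x,y)}{P(x,y)},
\]
where $P_{XY}'(a',x,y)$ is the marginal law of $(A,X,Y)$ evaluated at $(a',x,y)$. Then $(A',X,Y)$ has the same distribution as $(A,X,Y)$, so all entropies involving only $A',X,Y$ equal the corresponding entropies with $A$ in place of $A'$. Moreover $A'$ is conditionally independent of $(A,B)$ given $(X,Y)$, that is,
\[
h(A' A B X Y) + h(XY) = h(A' X Y) + h(A B X Y).
\]

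\textbf{Step 2: Shannon inequalities on the extended function.} The extended entropy function on $\{A,A',B,X,Y\}$ is entropic, hence a polymatroid by Theorem~\ref{thm:shannon}. I would then write a short chain of Shannon inequalities in five variables. The natural candidate is the Ingleton-type bound
\[
I(X;Y) \le I(X;Y\mid A) + I(X;Y\mid A') + I(A;A'),
\]
which holds for any polymatroid when combined with suitable conditional-independence terms, together with submodularity inequalities such as $I(A;A') \le I(A;A'\mid B) + I(A;B)$. After that, substitute the copy identities: replace each $A'$-entropy by the matching $A$-entropy, and use $I(A';AB\mid XY)=0$ to rewrite the mixed terms $I(A;A')$ and $I(A;A'\mid B)$ in terms of $h$ on $A,B,X,Y$ alone.

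\textbf{Step 3: bookkeeping.} Expanding all mutual informations into joint entropies and collecting terms, the coefficients should come out as in the statement, i.e.\ the known form
\[
2I(X;Y) \le I(A;B) + I(A;XY) + 3I(X;Y\mid A) + I(X;Y\mid B).
\]
One checks that this form matches the displayed inequality term by term.

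\textbf{Main obstacle.} The main obstacle is choosing the exact five-variable Shannon combination so that, after the copy substitutions, every $A'$-term cancels and the coefficients $2,2,1,-3,-3,-3,-1,-1,1,4,1$ appear. My first attempt would be the Ingleton-with-error-terms inequality for the quadruple $(X,Y;A,A')$, adding $I(A;A'\mid B)$ bounds as needed. If the coefficients do not match, I would instead solve for nonnegative multipliers of elemental inequalities with a small LP. Alternatively, since the lemma is cited from \cite{zhang1998characterization}, one may simply invoke it.
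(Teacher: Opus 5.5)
\textbf{Gap in the self-contained argument.} The paper does not prove this lemma; it imports it from Zhang and Yeung~\cite{zhang1998characterization}, so your closing fallback is exactly what the paper does. As a self-contained proof, however, your Steps~2--3 are a plan rather than an argument. Your Step~1 copy (a single copy $A'$ of $A$ over $XY$) is the right construction. But the one concrete reduction you sketch fails as phrased, for two reasons.

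First, the Ingleton bound $I(X;Y)\le I(X;Y\mid A)+I(X;Y\mid A')+I(A;A')$ is not a polymatroid inequality. It holds only given $I(A;A'\mid XY)=0$, and you have to prove it.

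Second, and more seriously, the mixed term $I(A;A'\mid B)$ cannot be ``rewritten'' using the copy identities, because it is not one of the quantities they fix. It must be upper-bounded, and the choice of bound is where the non-Shannon content enters. The obvious choice is $I(A;A'\mid B)\le I(A;XY\mid B)$, which follows from $I(A;A'\mid BXY)=0$. That choice ends in $I(X;Y)\le 2I(X;Y\mid A)+I(A;BXY)$. This is already implied by the Shannon inequality $I(X;Y)\le I(X;Y\mid A)+I(A;XY)$, so that route cannot yield the stated non-Shannon inequality.

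\textbf{The missing step.} Bound the mixed term on the copy side:
\[
I(A;A'\mid B)\le I(A';XY\mid B)\le I(X;Y\mid B)+I(A';X\mid Y)+I(A';Y\mid X).
\]
The first inequality uses $I(A';A\mid BXY)=0$. For the second:
\begin{itemize}
\item expand $I(A';XY\mid B)=I(A';X\mid B)+I(A';Y\mid XB)$;
\item bound $I(A';X\mid B)\le I(X;Y\mid B)+I(A';X\mid YB)$;
\item use $I(A';B\mid XY)=0$ to get $I(A';X\mid YB)\le I(A';X\mid Y)$ and $I(A';Y\mid XB)\le I(A';Y\mid X)$.
\end{itemize}

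The same expansion with $A$ in place of $B$ gives $I(A';XY\mid A)\le I(X;Y\mid A)+I(A';X\mid Y)+I(A';Y\mid X)$, using $I(A';A\mid XY)=0$. Combine this with the identities
\[
I(X;Y)=I(X;Y\mid A')+I(A';XY)-I(A';X\mid Y)-I(A';Y\mid X)
\]
and $I(A';XY)=I(A';XY\mid A)+I(A;A')$ (the latter again using $I(A;A'\mid XY)=0$). This proves your Ingleton step.

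Now chain everything with $I(A;A')\le I(A;B)+I(A;A'\mid B)$, and apply the copy identities $I(X;Y\mid A')=I(X;Y\mid A)$, $I(A';X\mid Y)=I(A;X\mid Y)$ and $I(A';Y\mid X)=I(A;Y\mid X)$. This yields
\[
I(X;Y)\le 2I(X;Y\mid A)+I(X;Y\mid B)+I(A;B)+I(A;X\mid Y)+I(A;Y\mid X).
\]
Expanded term by term, this is exactly the displayed inequality, and it is equivalent to the $2I(X;Y)\le\cdots$ form you quote. With these steps written out, your copy-lemma route is a complete proof, which is more than the paper itself provides.
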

One can  combine the inequality from Lemma~\ref{lem:zy} with the constraints in~(\ref{eq:cczy}) and (\ref{eq:fdzy}),
and all polymatroid inequalities (monotonicity and submodularity), and then maximize the value of $h(XYABC)$ compatible with those constraints.
We obtain a linear program for which the optimal value can be computed by a symbolical computation
(by hand or with help of computer). The computed optimal value is the value $\frac{43}{11}\log(N)$ claimed above, see~\cite{Khamis0S17,Suciu23}.
Thus the entropic bound for this query is at most $\frac{43}{11}\log(N)$
(it remains possible that with a better understanding of the laws of Shannon entropy one could get an even tighter bound.)

\subsection{Untangling the functional dependencies}\label{sec:untangling}

\begin{algorithm}[t]
	\caption{The untangling algorithm}
	\label{alg:join}
	\begin{algorithmic}[1] %
		\Procedure{Untangle}{$Q, \vset X, \Pi, \Sigma, \pi$} 
		\State /* we assume $Q$ is a query with output variables $\vset X$ and without projections\ */
		\State $Q' \gets Q$
		\State $\vset{X}' \gets \vset X$
		\State $\vset{Y}' \gets \vset X$
		\State $\Sigma' \gets \Sigma$
		\ForAll{variables $X$ from $\vset X$ in order $\pi$}
			\If{there is an offended FD for $\pi$ via $X$ in $\Sigma'$}
				\State{introduce new variable $X'$}%
				\State{$\vset X' \gets  (\vset  X' \setminus \{X\})\cup \{X'\}$ }
				\State{$\vset Y' \gets  \vset  Y' \cup \{X'\}$ }
				\State{$Q' \gets Q', R_X(X, X')$} %
				\State{add $X\rightarrow X'$ to $\Sigma'$}
				\ForAll{FDs $\vset W\rightarrow \vset V$ in $\Sigma'$ offended for $\pi$ via $X$}
					\State{delete $\vset W\rightarrow \vset V$ from $\Sigma'$}
					\State{add $(\vset W\rightarrow  \left( (\vset V\setminus \{X\})\cup \{X'\} \right)$  to $\Sigma'$}
				\EndFor
			\EndIf
		\EndFor
		\State /* by construction, $Q'(\vset X)$ is a query  with variable set $\vset Y'$ and output variables $\vset X'$ \ */
		\State \Return $Q'(\vset X'), \Sigma'$  %
		\EndProcedure
	\end{algorithmic}
\end{algorithm}

For our purposes, the query $Q_{ZY}$ from~(\ref{eq:basequery}) is not immediately useful since its constraints are cyclic. 
So we transform it into another query $Q_{ZY}'$ with new functional dependencies~$\Sigma'$. 
Rather than presenting an ad-hoc trick, we explain a general procedure 
 that transforms any given query $Q$ without projected variables but with potentially cyclic FDs $\Sigma$ into another query $Q'$ with acyclic FDs $\Sigma'$ but potentially with projections. So essentially in the transformation we gain acyclicity of the FDs at the price of introducing projections. We then show that the polymatroid bound of $Q'$ and $\Sigma'$ is at least that of $Q$ and $\Sigma$. Moreover, applying this procedure on $Q_{ZY}$ and $\Sigma$ yields $Q_{ZY}'$ and $\Sigma'$ for which we can then show the required inequality for the entropic bound. This will prove the main result of this paper.

Consider any query $Q$ without projected variables with a set $\Sigma$ of potentially cyclic functional dependencies. We fix an order $\pi$ of 
the variable set $\vset Y$ of $Q$. Essentially, $\pi$ is a guess for a topological order of the variables. We say that a constraint is \emph{offended 
for the order $\pi$ (via a variable $V$)} if there are variables $V, W$ such that $V$ is before~$W$ in $\pi$ but $W$ is in the body of this dependency  and $V$ is in its head. Clearly, there are no offended constraints for $\pi$ if and only if $\pi$ is a topological order of the dependency graph of $Q$. So the idea is to systematically \emph{untangle} all offended constraints by adding new variables, and eventually end up with a topological order witnessing that the constraints are acyclic. To do so, we are using the procedure given in \Cref{alg:join} which we describe next.

Given a query $Q(\vset X)$ with constraints $\Pi, \Sigma$ and a variable order $\pi$, we go through the list of variables in the order $\pi$ and do the following: if the current variable $X$ is not in the head of an offended constraint for the current order, we skip it and proceed with the next variable. Otherwise, we introduce a  new variable $X'$ (which is essentially a ``copy'' of $X$) and make the link between the original variable $X$ and its copy $X'$
 by a new functional dependency $X\rightarrow X'$. Since for each functional dependency there must be an atom that contains all of its variables, we also add an atom $R_X(X, X')$ where $R_X$ is a fresh relation name. 
 Moreover, in the set of output variables we replace the old variable $X$ by its copy $X'$.
Observe that due to this, there are projections in the new query (even though the original query does not have any), since $X$ is not an output variable anymore.
Finally, for every  occurrence of $X$ in the head of an offended constraint, we replace  $X$ by~$X'$. This completes the construction of the new instance $Q'$ and the set of FDs. 

\begin{observation}
	For every input $(Q, \vset X, \Pi, \Sigma, \pi)$, 
	\Cref{alg:join}  returns $Q',\Sigma'$ such that 
	the obtained set of functional dependencies $\Sigma'$ is acyclic.
\begin{proof}[Proof (sketch)]
	The acyclicity condition can be violated only in the presence of offended constraints. 
	When processing each variable in $\vset X$, the algorithm eliminates every constraint  offended via that variable. 
	As for the newly introduced variables, they appear only in the heads of functional dependencies and therefore cannot violate the acyclicity condition.
\end{proof}
\end{observation}

Let us apply the Algorithm~\ref{alg:join} to  the example~$Q_{ZY}$ presented in Section~\ref{sec:introduceZY}.
We fix the variable order $\pi= ABXYC$. When running the algorithm, we first consider the variable~$A$ for which the constraints (\ref{a1}), (\ref{a2}) and (\ref{a3}) are offended. So we introduce a new variable $A'$ and swap the offended constraints out for 
\begin{align*}
	BXY\rightarrow A'C\\
	XC \rightarrow A'BY\\
	YC \rightarrow A'BX.
\end{align*}
Also, $A'$ becomes an output variable instead of $A$, and we add a new functional dependency $A\rightarrow A'$. 
Going through the list of all variables, we obtain the query 
\begin{equation}
\begin{aligned}
Q_{ZY}'(A', B', X', Y', C) \datarule &R_1(X,Y), R_2(A,X), R_3(A,Y),\\ & R_4(B,X), R_5(B,Y), R_6(C), R_7(A,B,X,Y,C),\\
& R_A(A, A'), R_B(B, B'), R_X(X, X'), R_Y(Y, Y') 
\label{eq:basequery-prime}
\end{aligned}
\end{equation}
with the extended variable set $\vset Y'= \{A,B,X,Y,C, A', B', X', Y'\}$.
The new set of functional dependencies $\Sigma'$ consists of those in $\Sigma$ in which potentially some variables are swapped out, 

\begin{equation}
	\begin{split}
	&AB \rightarrow XYC\\
	&AXY \rightarrow B'C\\
	&BXY\rightarrow A'C\\
	&AC \rightarrow B'X'Y'\\
	&XC\rightarrow A'B'Y'\\
	&YC\rightarrow A'B'X',
	\end{split}\label{eq:untangled}
\end{equation}
and the additional dependencies we added when introducing new variables
\begin{align}
	A \rightarrow A', B \rightarrow B', X \rightarrow X', Y \rightarrow Y'.\label{eq:unaryUntangle}
\end{align}
We keep the same cardinality constraints $\Pi$ as before.

\Cref{thm:acyclicgap} is a direct corollary of the following result, where $\vset Y', \vset X', \Sigma', \Pi$ are taken  from the query $Q'_{ZY}$ as defined above.
\begin{proposition}\label{prop:concretebounds}
	$\logb_{\Gamma^*_{\vset Y'}}(\vset X', \Pi, \Sigma') \le \frac{43}{11 }\log(N)< 4\log(N) \le \logb_{\Gamma_{\vset Y'}}(\vset X', \Pi, \Sigma')$.
\end{proposition}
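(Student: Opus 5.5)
The plan is to prove the two outer inequalities separately; the middle one, $\frac{43}{11}<4$, is trivial.

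\emph{Polymatroid lower bound.} I would take a polymatroid $h$ on $\vset Y=\{A,B,X,Y,C\}$ that attains $h(ABXYC)=4\log(N)$ under (\ref{eq:cczy}) and (\ref{eq:fdzy}); it exists by~\cite{Suciu23,Khamis0S17}, and the supremum is attained because the feasible region is compact. I would extend it to $h'$ on $\vset Y'$ by treating each primed variable as a perfect copy: $h'(\vset S):=h(\rho(\vset S))$, where $\rho$ maps $A',B',X',Y'$ to $A,B,X,Y$ and fixes the other variables. The function $h'$ is a polymatroid, because $\rho$ commutes with unions and is monotone under inclusion. For submodularity one uses $\rho(\vset S\cap \vset T)\subseteq \rho(\vset S)\cap\rho(\vset T)$ together with monotonicity of $h$. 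The cardinality constraints involve only unprimed variables, so they are unchanged. Every FD in $\Sigma'$ becomes, after applying $\rho$, either an FD of $\Sigma$ or a trivial one ($A\to A$ and so on). Hence $h'\models(\Pi,\Sigma')$. Finally, $h'(\vset X')=h(ABXYC)=4\log(N)$.

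\emph{Entropic upper bound.} Let $h$ be entropic on $\vset Y'$ with $h\models(\Pi,\Sigma')$. The key claim is that, up to conditioning, the restriction of $h$ to $\{A,B,X,Y,C\}$ satisfies enough of (\ref{eq:fdzy}) to run the Zhang--Yeung argument, and that $h(\vset X')\le h(ABXYC)$.

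For the second part, $A\to A'$, $B\to B'$, $X\to X'$ and $Y\to Y'$ give $h(A'B'X'Y'C)\le h(ABXYCA'B'X'Y')=h(ABXYC)$.

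For the first part, I would recover the original FDs in the form needed. For instance, $XC\to A'B'Y'$ does not give $h(A\mid XC)=0$. It does give $h(A'\mid XC)=0$, though, so the copies behave like $A$ "as seen from" $X,C$. My plan is therefore to avoid reconstructing $\Sigma$ and instead to redo the LP over the variables $A',B',X',Y',C$ directly. I would apply Lemma~\ref{lem:zy} to the entropic function restricted to $A,B,X,Y$, and possibly also to mixed quadruples such as $A,B,X',Y'$. I would then combine these with the Shannon inequalities and the constraints of $\Sigma'$ and $\Pi$, and redo the dual certificate from~\cite{Khamis0S17} that yields $\frac{43}{11}\log(N)$.

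Concretely, I would take the dual combination certifying $h(ABXYC)\le\frac{43}{11}\log(N)$ for $Q_{ZY}$ and check each use of an FD from (\ref{eq:fdzy}). The aim is to replace it by an FD in $\Sigma'$ in which the head variables are primed, and to use the primed copies in the target $h(A'B'X'Y'C)$. Where an unprimed variable is needed in a body while only its copy is determined, I would use $h(\vset S')\le h(\vset S)$ together with the FDs $A\to A'$ and the rest, so that inequalities only move in the favorable direction.

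\emph{Main obstacle.} I expect the difficult step to be this last verification. The certificate must be re-derived so that every FD use points "downward", with primed variables in the heads, and so that the non-Shannon inequality is applied to variables whose relevant entropies are bounded by $\Pi$. If the direct replay fails, my fallback is a symmetrization argument: replace each $A'$ by $A$ wherever the certificate needs it, and bound the loss by terms like $h(A\mid A')$, which should cancel between the upper and lower uses of the certificate.
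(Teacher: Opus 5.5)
Your polymatroid lower bound is correct and matches the paper's argument. Your $\rho$ is the paper's map $s$, and $h\circ\rho$ is shown to be a polymatroid satisfying $\Sigma'$ in the same way; you do not even need the supremum to be attained.

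\textbf{The entropic upper bound is only a plan, and its decisive step is missing.} You rightly point to the certificate \eqref{eq:suciu}, and priming the left side is harmless since $h(A'B'X'Y'C)\le h(ABXYC)$. But to make the right-hand conditional terms vanish under $\Sigma'$, you must also prime their heads, e.g.\ $h(AC\mid BXY)\to h(A'C\mid BXY)$, and this \emph{decreases} those terms. So your claim that $h(\vset S')\le h(\vset S)$ makes ``inequalities only move in the favorable direction'' fails exactly where it matters. Monotonicity alone cannot turn \eqref{eq:suciu} into \eqref{eq:suciu-modified}. Your fallback does not repair this either. An unconditioned $h(A\mid A')$ only bounds each loss from above, whereas you need the loss on the left to be \emph{at least} the total loss on the right. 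The losses do not cancel; one must dominate the other.

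The missing idea is exact bookkeeping of these losses. If $h(Z'\mid Z)=0$, then $h(Z\vset L\mid\vset M)-h(Z'\vset L\mid\vset M)=h(Z\mid Z'\vset L\vset M)$ (Lemma~\ref{lem:Atoa}). Every term of \eqref{eq:suciu} that gets primed (the left side and the conditional terms) has head and body together covering all five variables. Hence each loss has the form $h(Z\mid Z',\text{the other four})$, where some of the other four may already be primed. Prime one variable at a time in the order $A,B,X,Y$:
\begin{itemize}
\item the left side loses $11$ copies of such a term at each step;
\item the right side loses $5$, $9$, $3$ and $3$ copies, respectively;
\item the loss term on the left is the largest of them, because it conditions on primed copies and $h(\vset L\mid Z\vset M)\le h(\vset L\mid Z'\vset M)$ (Lemma~\ref{lem:Atoa-otherside}); for instance $h(B\mid AB'XYC)\le h(B\mid A'B'XYC)$.
\end{itemize}
This yields \eqref{eq:suciu-modified}. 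Its conditional terms vanish under $\Sigma'$, and its remaining terms total at most $43\log(N)$ by $\Pi$. Without this argument, or an actual LP computation, your proposal does not establish the first inequality of the proposition.
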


One  can prove \Cref{prop:concretebounds} with an assistance of a computer, as follows.
First of all, we combine the inequalities for the FDs (\ref{eq:untangled})  and (\ref{eq:unaryUntangle}), the constraints from~(\ref{eq:cczy}), and all polymatroid inequalities and Theorem~\ref{lem:zy},
 in one linear program (LP),  feed this optimization problem into any LP solver,
 and compute the maximal value for $h(X'Y'A'B'C)$. 
This computation gives the polymatroid bound, which in this case is $4$. 
Further,  if  we  include in this linear program  one more constraint --- the inequality from~\Cref{lem:zy} --- 
we obtain a smaller maximal value for $h(X'Y'A'B'C)$, which in this case is $\frac{43}{11}$.
This mechanical computation concludes the proof of the proposition.
In the appendix (see p.~\pageref{sct:appendix}) we discuss this approach in more detail.
However, a proof relying on heavy computation performed by a computer (calculating the optimal value of two huge linear programs)  
might look somewhat unsatisfactory.  So in the following subsections we provide a fully human-checkable argument.

\subsection{The Polymatroid Bound}\label{sct:polymatroid-bound}

In this section, we show the lower  bound on $\logb_{\Gamma^*_{\vset Y'}}( {\vset X'}, \Pi, \Sigma)$ from Proposition~\ref{prop:concretebounds}.  
More generally, we will show that the construction of \cref{alg:join} never decreases the polymatroid bound.
Let us formulate this statement more precisely.

\begin{proposition}
	Let $Q(\vset X)$ be a CQ in variables $\vset X$ without projected variables with cardinality constraints $\Pi$ and functional dependencies $\Sigma$. 
	We apply  \Cref{alg:join} to the input $Q(\vset X), \Pi, \Sigma$  and any variable order $\pi$.
	Let $Q'(\vset X'), \Sigma'$ be the output returned by the algorithm,
	and let $\vset Y'$ be the variables set of the output query $Q'$. Then 
	\begin{align*}
		\logb_{\Gamma_{\vset X}}(\vset X, \Pi, \Sigma) \le \logb_{\Gamma_{\vset Y'}}(\vset X', \Pi', \Sigma').
	\end{align*}
\end{proposition}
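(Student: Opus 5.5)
The plan is to take any polymatroid $h$ on $\vset X$ (the original variable set, since $Q$ has no projections) with $h\models(\Pi,\Sigma)$ and extend it to a polymatroid $h'$ on $\vset Y'$ with $h'\models(\Pi',\Sigma')$ and $h'(\vset X') = h(\vset X)$. Taking suprema then gives the claimed inequality. The extension is the natural one: every new variable $X'$ is declared a perfect copy of $X$. Formally, let $\sigma:\vset Y'\to\vset X$ map each copy $X'$ to its original $X$ and fix the original variables. For $\vset A\subseteq\vset Y'$ we set
\[
h'(\vset A) := h(\sigma(\vset A)).
\]

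First I check that $h'$ is a polymatroid.
\begin{itemize}
\item We have $h'(\emptyset)=0$.
\item Monotonicity is inherited because $\sigma$ preserves inclusion.
\item For submodularity, use $\sigma(\vset A\cup\vset B)=\sigma(\vset A)\cup\sigma(\vset B)$ and $\sigma(\vset A\cap\vset B)\subseteq\sigma(\vset A)\cap\sigma(\vset B)$. Monotonicity and submodularity of $h$ then give
\[
h'(\vset A\cup\vset B)+h'(\vset A\cap\vset B)\le h(\sigma\vset A\cup\sigma\vset B)+h(\sigma\vset A\cap\sigma\vset B)\le h'(\vset A)+h'(\vset B).
\]
\end{itemize}
This is the standard fact that composing a polymatroid with a union-preserving map of ground sets yields a polymatroid. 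It is the one place where care is needed, since intersections are not preserved.

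Next I verify the constraints.
\begin{itemize}
\item The cardinality constraints in $\Pi'$ are the original ones on the original atoms, and $h'$ agrees with $h$ on subsets of $\vset X$. For the new atoms $R_X(X,X')$, which carry no cardinality constraint (or are at worst bounded trivially), there is nothing to check.
\item Each FD in $\Sigma'$ is either $X\to X'$ or is obtained from an FD $\vset W\to\vset V$ of $\Sigma$ by replacing some head variables with their copies. Possibly several replacements happen over successive iterations, but bodies are never changed. In the first case, $h'(X'\mid X)=h(X)-h(X)=0$.
\item In the second case, the modified FD is $\vset W\to\vset V'$ with $\sigma(\vset V')=\vset V$ and $\sigma(\vset W)=\vset W$. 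Hence $h'(\vset V'\mid\vset W)=h(\vset V\vset W)-h(\vset W)=h(\vset V\mid\vset W)=0$.
\end{itemize}
A small bookkeeping point is that the algorithm may modify an FD several times, once per variable. I would handle this by the invariant that every FD in the current $\Sigma'$ has a body contained in $\vset X$ and a head mapped by $\sigma$ into the head of an FD of $\Sigma$ (or is of the form $X\to X'$). This invariant is preserved by each iteration.

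Finally, the output set $\vset X'$ arises from $\vset X$ by replacing some $X$ with $X'$, so $\sigma(\vset X')=\vset X$ and $h'(\vset X')=h(\vset X)$. Therefore
\[
h(\vset X)\le\logb_{\Gamma_{\vset Y'}}(\vset X',\Pi',\Sigma')
\]
for every feasible $h$, and taking the supremum over $h$ proves the proposition. The main obstacle, such as it is, is the submodularity check for $h'$ and the iteration invariant. The rest is routine.
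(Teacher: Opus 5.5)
Your proposal is correct and matches the paper's proof essentially step for step. The paper also defines $h'(\vset A) := h(s(\vset A))$, where $s$ sends each copy to its original, and derives submodularity from $s(\vset A\cup\vset B)=s(\vset A)\cup s(\vset B)$ and $s(\vset A\cap\vset B)\subseteq s(\vset A)\cap s(\vset B)$; it then checks the unary dependencies $X\rightarrow X'$ and the modified dependencies exactly as you do, while your explicit handling of the cardinality constraints and of the iteration invariant spells out bookkeeping the paper leaves implicit.
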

\begin{proof}
	We assume that there is polymatroid $h$  in $\vset X$ that respects all cardinality constraints $\Pi$ and functional dependencies $\Sigma$. 
	By definition, the value of $h({\vset X})$ must be below the bound %
	$\logb_{\Gamma_{\vset X}}(\vset X, \Pi, \Sigma)$. 
	We will construct another polymatroid $h'$, defined on ${\vset Y}'$ and respecting the constraints from $\Pi$ and $\Sigma'$. 
	Moreover, in our construction we will have $h'({\vset X}') = h({\vset X})$.
	Thus, the bound $\logb_{\Gamma_{\vset Y'}}(\vset X', \Pi', \Sigma')$ applies in particular to $h'$ and, therefore, to $h$. This implies the statement of the lemma.
	
	To implement the plan sketched above,  
	we define the inverse operation of the variable addition in \Cref{alg:join}. For every variable $X\in \vset Y'$, let $s(X)$ stand for the variable $V\in \vset Y$ such that 
	 $X=V$ (in the case $X\in \vset X$) or $X= V' $  (in the case $X\in  \vset Y'\setminus \vset Y$). Essentially, copies of variables are mapped to their original, while the originals are mapped to themselves.
	 For every variable set $\vset A\subseteq \vset Y'$, we extend this definition to $s(\vset A):= \{s(X)\mid X\in \vset A\}$.
	
	The operation $s(\cdot)$ plays well with the usual operations on sets:
	\begin{claim}
		For all $\vset A,\vset B\subseteq \vset Y'$ we have:
		\begin{itemize}
			\item if $\vset A\subseteq \vset B$, then $s(\vset A)\subseteq s(\vset B)$ (monotonicity),
			\item $s(\vset A\cup \vset B) = s(\vset A)\cup s(\vset B)$, and
			\item $s(\vset A\cap \vset B)\subseteq s(\vset A)\cap s(\vset B)$.
		\end{itemize}
	\end{claim}
	\begin{claimproof}
		The first and second claims are immediately clear from the definition.
		
		For the third claim, note that
		\begin{align*}
			s(\vset A\cap \vset B) = \{s(X)\mid X\in \vset A\cap \vset B\} \subseteq \{s(X)\mid X\in \vset A\} \cap \{s(X)\mid X\in \vset B\} = s(\vset A) \cap s(\vset B).
		\end{align*}
	\end{claimproof}
	Now assume that $h$ is a polymatroid on $\vset X$ satisfying the constraints in $\Pi$ and $\Sigma$. We define a set function $h'$ on $\vset Y'$ as
	 \(
	h'(\vset A):= h(s(\vset A)).
	\)
	Clearly, for the set $\vset X$ we have $h'(\vset X') = h(\vset X)$, since $s(\vset X') = X$. 
	So to prove the lemma it remains to show that $h'$ is a polymatroid that satisfies the constraints of $\Sigma'$. 
	
	We first show that $h'$ is a polymatroid. For monotonicity, consider $\vset A, \vset B\subseteq \vset Y'$ with $\vset A\subseteq \vset B$, then
	\begin{align*}
		h'(\vset A) = h(s(\vset A)) \le h(s(\vset B)) = h'(\vset B), 
	\end{align*}
	where we use the monotonicity of $s$ and $h$. To show that $h'$ is submodular, consider arbitrary sets $\vset A, \vset B\subseteq \vset Y'$. Then
	\begin{align*}
		h'(\vset A\cup \vset B) + h'(\vset A\cap \vset B) &= h(s(\vset A\cup \vset B)) + h(s(\vset A\cap \vset B))\\
		&\le h(s(\vset A)\cup s(\vset B)) + h(s(\vset A)\cap s(\vset B))\\
		& \le h(s(\vset A)) + h(s(\vset B)) &\ {[\text{since $h$ is submodular}]}\\
		& = h'(\vset A) + h'(\vset B).
	\end{align*}
	So, noting that $h'(\emptyset) = h(s(\emptyset)) = h(\emptyset) = 0$, we conclude that $h'$ is a polymatroid.
	
	It remains to show that $h'$ satisfies the functional dependencies in $\Sigma'$. For all variables $X\in \vset  X$ and their copy $X'$, we have $s(X') = s(X) = X$, so $h'(XX') = h(s(XX'))= h(X) = h'(X)$, and thus $h'$ satisfies the unary functional dependencies the algorithms introduced.
	
	Let $\vset W \rightarrow \vset U$ be any other constraint of $\Sigma'$. 
	Then by construction we have $s(\vset W) =\vset W$ and $h'(\vset U) = h(s(\vset U))$. Since $\vset W\rightarrow s(\vset U)$ is a constraint in $\Sigma$, we have $h(s(\vset U)\vset W)= h(\vset W)$ and thus
	\begin{align*}
		h'(\vset U \vset W) = h(s(\vset U \vset W)) = h(s(\vset U)\vset W) = h(\vset W) = h(s(\vset W)) = h'(\vset W).
	\end{align*}
	Therefore, $h'$ respects the the functional dependency $\vset W\rightarrow \vset U$.
\end{proof}

\subsection{The Entropic Bound}\label{sct:entropicbound}

In this subsection we will show that the entropic bound for the query $Q_{ZY}'$ with the FDs we constructed in Section~\ref{sec:untangling} is at most $\frac{43}{11}\log(N)$.
As we have shown in Section~\ref{sct:polymatroid-bound}, this bound is strictly less than the polymatroid bound. Our starting point is the proof of an upper bound for $Q_{ZY}$ in~\cite{Khamis0S17,Suciu23},
where it was shown that for all entropic functions $h$ in variables $(A,B,C,X,Y)$ the following inequality is valid:
\begin{align}
	\begin{split}
	11 h(ABXYC) \le &3 h(XY) + 3h(AX) + 3h(AY) +h(BX) + h(BY) + 5h(C) \\
	& + h(XYC|AB) + 4h(BC|AXY) + h(AC|BXY) \\&+ h(BXY|AC)+ 2h(ABY|XC) + 2h(ABX|YC).
	\end{split}\tag{*}\label{eq:suciu}
\end{align}
This inequality follows from a combination of Theorem~\ref{lem:zy} with several polymatroid inequalities. 
In the application of (\ref{eq:suciu}) discussed in~\cite{Khamis0S17,Suciu23}, all terms on the right-hand side are bounded,
so the inequality implies an upper bound for $h(ABXYC)$.
This establishes a separation between the entropic bound and the polymatroid bound in the setting considered in~\cite{Khamis0S17,Suciu23}.

We cannot use  (\ref{eq:suciu}) directly, since in our settings we do not have a bound on the terms appearing on the right-hand side. 
Moreover, we need to bound not the value of $h(ABXYC)$ but  $h(A'B'X'Y'C)$ 
(for $A',B',X',Y'$ that are defined in Algorithm~\ref{alg:join}).
This is we need to prove that the following inequality inspired by (\ref{eq:suciu}) is true for all entropic functions $h$:
\begin{align*}
	11 h(A'B'X'Y'C) \le &3 h(XY) + 3h(AX) + 3h(AY) +h(BX) + h(BY) + 5h(C) \\
	& + h(XYC|AB) + 4h(B'C|AXY) + h(A'C|BXY) \\
	&+ h(B'X'Y'|AC)+ 2h(A'B'Y'|XC) + 2h(A'B'X'|YC).\tag{**}\label{eq:suciu-modified}
\end{align*}
With (\ref{eq:suciu-modified}), we can show that $h(A'B'X'Y'C)$ is bounded by $\frac{43}{11}\log(N)$,
which proves the upper bound for $\logb_{\Gamma_{\vset Y'}}(\vset X', \Pi, \Sigma')$  in Proposition~\ref{prop:concretebounds}.
Indeed, we observe that for $h$ satisfying the FDs in $\Sigma'$ (from  Proposition~\ref{prop:concretebounds}),
each term of conditional entropy in (\ref{eq:suciu-modified})  is equal to zero, due to the  functional dependencies in $\Sigma'$, see \eqref{eq:untangled}.
Each term of  non-conditional entropy on the right-hand side of (\ref{eq:suciu-modified}) is bounded by the cardinality constraints in $\Pi$, see \eqref{eq:cczy}. 
Combining these bounds we obtain $h(A'B'X'Y'C) \le \frac{43}{11}\log(N)$, as required.

It remains to prove (\ref{eq:suciu-modified}) based on  (\ref{eq:suciu}). 
It is not very hard to show that $h(A'B'X'Y'C)\le h(ABXYC)$ under the condition that $h$ is a polymatroid that satisfies $\Pi$ and $\Sigma'$, so the left-hand side of (\ref{eq:suciu-modified}) is smaller than that of (\ref{eq:suciu}). 
However, this does not directly help us since the quantities of the conditional entropies in (\ref{eq:suciu-modified}) can also be smaller than those in (\ref{eq:suciu}). 
To resolve this issue, we show that when going from (\ref{eq:suciu}) to (\ref{eq:suciu-modified}) the decrease in  value on the left-hand side cannot be bigger than on the right-hand side.  
The main technical argument is the following lemma which lets us compute the change in conditional entropies when substituting a variable $V$ by $V'$ in \Cref{alg:join}.

\begin{lemma}\label{lem:Atoa}
	Let $h$ be a polymatroid on a domain $\vset Y$ and let $\vset L, \vset M\subseteq \vset Y$ be disjoint. 
	Let moreover $Z, Z'\in \vset Y$ elements not in $\vset L\cup \vset M$ such that $h(Z'|Z) = 0$. Then 
	\begin{align*}
	h(Z\vset L|\vset M) - h(Z'\vset L|\vset M) = h(Z |  Z' \vset L\vset M ).
	\end{align*}
\end{lemma}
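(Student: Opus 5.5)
We prove the identity by expanding both conditional entropies with the chain rule (\Cref{obs:chainrule}) so that they share a common term, and then using $h(Z'|Z)=0$ to kill a leftover. Throughout, write $\vset M' := \vset L\vset M$.

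First, apply \Cref{obs:chainrule} to the set $Z Z'\vset L$ conditioned on $\vset M$ in two different orders. Peeling off $Z'$ first gives
\begin{align*}
h(ZZ'\vset L\mid \vset M) = h(Z'\vset L\mid \vset M) + h(Z\mid Z'\vset L\vset M).
\end{align*}
Peeling off $Z$ first gives
\begin{align*}
h(ZZ'\vset L\mid \vset M) = h(Z\vset L\mid \vset M) + h(Z'\mid Z\vset L\vset M).
\end{align*}
Equating the two right-hand sides and rearranging yields
\begin{align*}
h(Z\vset L\mid \vset M) - h(Z'\vset L\mid \vset M) = h(Z\mid Z'\vset L\vset M) - h(Z'\mid Z\vset L\vset M).
\end{align*}

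It remains to show $h(Z'\mid Z\vset L\vset M)=0$. By \Cref{obs:moreconditioned}, $h(Z'\mid Z\vset L\vset M)\le h(Z'\mid Z)=0$. Conversely, monotonicity of $h$ gives $h(Z'\mid Z\vset L\vset M)\ge 0$ directly from \Cref{def:conditional}. Hence this term vanishes and the claimed identity follows.

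There is no real obstacle here. The only point needing care is bookkeeping: all these expressions are written as differences of values of $h$ on unions of sets, so overlapping sets cause no issue. The hypotheses that $\vset L,\vset M$ are disjoint and $Z,Z'\notin \vset L\cup\vset M$ are not actually needed by the algebra; they only keep the statement clean.
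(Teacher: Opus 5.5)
Your proof is correct and is essentially the paper's argument: you expand $h(ZZ'\vset L\mid \vset M)$ by the chain rule in the two orders, then eliminate $h(Z'\mid Z\vset L\vset M)$ using Observation~\ref{obs:moreconditioned} together with $h(Z'\mid Z)=0$. You also state explicitly that this term is nonnegative (by monotonicity), which the paper uses without saying so, and you are right that the disjointness hypotheses are not needed.
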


\begin{proof}
	By Observation~\ref{obs:chainrule} %
	we have  
	\begin{align*}
		h(ZZ'\vset L|\vset M) = h(Z|Z' \vset L \vset M ) + h(Z' \vset L | \vset M).
	\end{align*}
	Further, from Observation~\ref{obs:chainrule} and Observation~\ref{obs:moreconditioned} we obtain 
	\begin{align*}
		h(ZZ' \vset L|\vset M) = h(Z'|Z\vset L\vset M) + h(Z\vset L |\vset M) = h(Z\vset L |\vset M).
	\end{align*}
	So we get 
	\(
		h(Z\vset L|\vset M) - h(Z'\vset L|\vset M) = h(Z | Z' \vset L\vset M ),
	\)
	and the lemma follows.
\end{proof}

Another ingredient that we need is the following.

\begin{lemma}\label{lem:Atoa-otherside} Let $h$ be a polymatroid on a domain $\vset Y$ and let $\vset L, \vset M\subseteq \vset Y$ be disjoint. Let moreover $Z, Z'\in \vset Y$ elements not in $\vset L\cup \vset M$ such that $h(Z'|Z) = 0$. Then 
	\(
	h(\vset L|Z \vset M) \le h(\vset L|Z' \vset M).
	\)	
\end{lemma}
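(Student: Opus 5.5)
The inequality says that conditioning on $Z$ reveals at least as much as conditioning on $Z'$, since $Z'$ is a function of $Z$. The plan is to compare both sides with a common quantity in which both $Z$ and $Z'$ appear in the condition. We use only the chain rule (Observation~\ref{obs:chainrule}), Observation~\ref{obs:moreconditioned}, and the hypothesis $h(Z'|Z)=0$.

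The first step is to show that adding $Z'$ to a conditioning set that already contains $Z$ changes nothing:
\begin{align*}
h(\vset L \mid Z \vset M) = h(\vset L \mid ZZ'\vset M).
\end{align*}
By Observation~\ref{obs:moreconditioned}, $h(Z'\mid Z\vset M) \le h(Z'\mid Z) = 0$, and this quantity is nonnegative by monotonicity. Hence $h(ZZ'\vset M) = h(Z\vset M)$. In the same way, $h(Z'\mid Z\vset L\vset M)\le h(Z'|Z)=0$ gives $h(ZZ'\vset L\vset M)=h(Z\vset L\vset M)$. Subtracting these two equalities, and using Definition~\ref{def:conditional}, yields the displayed identity.

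The second step applies Observation~\ref{obs:moreconditioned} with $\vset A=\vset L$, $\vset B = Z'\vset M$ and $\vset C=\{Z\}$:
\begin{align*}
h(\vset L\mid ZZ'\vset M) \le h(\vset L\mid Z'\vset M).
\end{align*}
Chaining this with the first step gives $h(\vset L|Z\vset M)\le h(\vset L|Z'\vset M)$, as claimed.

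There is no real obstacle here; the only point needing care is the first step. There we must check that the functional-dependency hypothesis $h(Z'|Z)=0$ survives additional conditioning (on $\vset M$, and on $\vset L\vset M$). This is exactly what Observation~\ref{obs:moreconditioned} together with nonnegativity of conditional terms provides. Disjointness of $\vset L$ and $\vset M$ is not really used; the notation is set-union throughout.
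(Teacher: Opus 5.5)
Your proof is correct and is essentially the paper's argument: both rest on the fact that $Z'$ can be absorbed into any set already containing $Z$ (which the paper packages as Lemma~\ref{lem:Atoa}), together with the single submodularity instance $h(ZZ'\vset L\vset M)+h(Z'\vset M)\le h(ZZ'\vset M)+h(Z'\vset L\vset M)$. The only difference is bookkeeping. The paper applies Observation~\ref{obs:moreconditioned} to the compensation terms, $h(Z\mid Z'\vset L\vset M)\le h(Z\mid Z'\vset M)$, whereas you apply it to $\vset L$ via the intermediate quantity $h(\vset L\mid ZZ'\vset M)$, which is slightly more direct.
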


\begin{proof}
	Expanding the definition of conditional entropy, we get 
	\begin{align*}
		h(\vset X|A \vset Y) &= h(A\vset X  \vset Y) - h(A \vset Y)\\
		h(\vset X|A' \vset Y) &= h(A'\vset X  \vset Y) - h(A' \vset Y).
	\end{align*}
	Moreover, by Lemma~\ref{lem:Atoa}
	\begin{align*}
		h(A\vset X \vset Y) - h(A' \vset X  \vset Y) = h(A|A' \vset X \vset Y)\\
		h(A \vset Y) - h(A' \vset Y) = h(A| A' \vset Y ).
	\end{align*}
	So we obtain 
	\begin{align*}
		h(\vset X | A' \vset Y) - h(\vset X | A \vset Y) 
		&= h(A' \vset X  \vset Y) - h(A' \vset Y) -  h(A \vset X \vset Y) + h(A \vset Y)\\
		&= - h(A|A' \vset X \vset Y ) + h(A| A' \vset Y  ) \\
		&\ge 0,  \ \text{[by Observation~\ref{obs:moreconditioned}]}\\
	\end{align*}
	and we are done.
\end{proof}

In what follows we apply Lemma~\ref{lem:Atoa} and ~\ref{lem:Atoa-otherside} to entropic functions which, by Theorem~\ref{thm:shannon} are a special class of polymatroids.

We now come back to the proof of (\ref{eq:suciu-modified}). We begin with \eqref{eq:suciu} and substitute step-by-step the occurrences of $A,B,X,Y$ by $A',B',X',Y'$, as in the untangling construction of \Cref{alg:join}. 
In the \emph{first step} we substitute $A$ by $A'$ in several occurrences in \eqref{eq:suciu}; 
this affects the terms  $h(ABXYC)$, $h(AC|BXY)$, $h(ABY|XC)$, $h(ABX|YC)$ shown in bold below:
\begin{align}
	\begin{split}
	\bm{ 11 h(ABXYC) } \le &3 h(XY) + 3h(AX) + 3h(AY) +h(BX) + h(BY) + 5h(C) \\
	& + h(XYC|AB) + 4h(BC|AXY) + \bm{h(AC|BXY)} \\&+ h(BXY|AC)+ \bm{ 2h(ABY|XC) } +  \bm{ 2h(ABX|YC) }.
	\end{split}\nonumber
\end{align}
By Lemma~\ref{lem:Atoa}, to balance each  substitution, we have to add the value of $h(A|A'BXYC)$.
Summing the coefficients,  we see that we have to add this quantity $11$ times on the left-hand side and $5$ times on the right-hand side.
Since $h$ is non-negative, we may therefore conclude that 
\begin{align}
	11 h(A'BXYC) \le &3 h(XY) + 3h(AX) + 3h(AY) +h(BX) + h(BY) + 5h(C) \nonumber\\
	& + h(XYC|AB) + 4h(BC|AXY) + h(A'C|BXY) \nonumber \\&+ h(BXY|AC)+ 2h(A'BY|XC) + 2h(A'BX|YC).\label{eq:suciu2}
\end{align}
In the \emph{second step}, we substitute $B$ by $B'$ in the terms $h(A'BXYC)$, $h(BC|AXY)$, $h(BXY|AC)$, $h(A'BY|XC)$, $h(A'BX|YC)$
shown in bold below:
\begin{align}
	\bm{11 h(A'BXYC)} \le &3 h(XY) + 3h(AX) + 3h(AY) +h(BX) + h(BY) + 5h(C) \nonumber\\
	& + h(XYC|AB) + \bm{4h(BC|AXY)} + h(A'C|BXY) \nonumber \\&+\bm{ h(BXY|AC)} 
	+ \bm{2h(A'BY|XC)} +  \bm{ 2h(A'BX|YC) }.\nonumber
\end{align}
By Lemma~\ref{lem:Atoa},  each of these substitutions can be balanced by adding the terms  $h(B|A'B'XYC)$ and $h(B|AB'XYC)$ respectively. 
By Lemma~\ref{lem:Atoa-otherside} we have $h(B|A'B'XYC)\ge h(B|AB'XYC)$; it is important that  the biggest  ``compensation terms'' $h(B|A'B'XYC)$ 
appears on the left-hand side of the inequality.
So again, summing the coefficients, we conclude that that the overall decrease on the left-hand side is at least that on the right-hand side, so we get
\begin{align}\label{eq:suciu3}
	11 h(A'B'XYC) \le &3 h(XY) + 3h(AX) + 3h(AY) +h(BX) + h(BY) + 5h(C) \nonumber \\
	& + h(XYC|AB) + 4h(B'C|AXY) + h(A'C|BXY) \nonumber \\&+ h(B'XY|AC)+ 2h(A'B'Y|XC) + 2h(A'B'X|YC).
\end{align}
In the \emph{third step} 
we substitute $X$ by $X'$ in the terms $h(A'B'XYC)$, $h(B'XY|AC)$, $h(A'B'X|YC)$,
\begin{align}
	\bm{11 h(A'B'XYC)} \le &3 h(XY) + 3h(AX) + 3h(AY) +h(BX) + h(BY) + 5h(C) \nonumber \\
	& + h(XYC|AB) + 4h(B'C|AXY) + h(A'C|BXY) \nonumber \\&
	+ \bm{h(B'XY|AC)}+ 2h(A'B'Y|XC) + \bm{2h(A'B'X|YC)}. \nonumber
\end{align}
Each of these three substitutions can be balanced by adding the terms $h(X|A'B'X'YC)$ (with a factor of $11$), $h(X|AB'X'YC)$ (with a factor of $1$), and again $h(X|A'B'X'YC)$ (with a factor of $2$)
respectively.
Among all these added terms,  the quantity of the entropy $h(X|A'B'X'YC)$ (appearing in  the left-hand side) is the biggest. 
This implies 
\begin{align}
	11 h(A'B'X'YC) \le &3 h(XY) + 3h(AX) + 3h(AY) +h(BX) + h(BY) + 5h(C) \nonumber \\
	& + h(XYC|AB) + 4h(B'C|AXY) + h(A'C|BXY) \nonumber \\&+ h(B'X'Y|AC)+ 2h(A'B'Y|XC) + 2h(A'B'X'|YC).\label{eq:suciu4} 
\end{align}

In the final step, we now substitute $Y$ by $Y'$ in $h(A'B'X'YC)$, $h(B'X'Y|AC)$, and $h(A'B'Y|XC)$. 
Similarly to the previous arguments, we obtain (\ref{eq:suciu-modified}), which completes the proof.

\section{A Tight Case}

Let $Q(\vset X)$ a conjunctive query and $\Sigma$ a set of functional dependencies. We say that $\vset X$ is a prefix with respect to $\Sigma$ if there is a topological order
of the dependency graph $G_\Sigma$ (see p.~\pageref{directed-graph}) in which all variables in $\vset X$ come before all variables not in $\vset X$. We show here that if $\vset X$ is a prefix with respect to $\Sigma$, then the polymatroid bound is tight.

\begin{theorem}\label{prop:prefixtight}
 Let $Q(\vset X)$ be a conjunctive query in variables $\vset Y$, let $\Pi$ be a set of cardinality constraints and $\Sigma$ be a set of functional dependencies such that $\vset X$ is a prefix with respect to $\Sigma$. Then
\begin{align*}
	\logb_{\Gamma^*_{\vset Y}}(\vset X, \Pi, \Sigma) = \logb_{\Gamma_{\vset Y}}(\vset X, \Pi, \Sigma).
\end{align*} 
\end{theorem}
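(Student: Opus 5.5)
The plan is to reduce to the projection-free case, where the introduction recalls from~\cite{ngo2018worstb} that for acyclic constraints the entropic and polymatroid bounds coincide. The reduction goes through an auxiliary query that lives only on the output variables. Since $\logb_{\Gamma^*_{\vset Y}} \le \logb_{\Gamma_{\vset Y}}$ always holds, only the inequality $\ge$ needs proof. Concretely, for every polymatroid $h\in\Gamma_{\vset Y}$ with $h\models(\Pi,\Sigma)$ and every $\varepsilon>0$, I will exhibit an entropic $h^*\in\Gamma^*_{\vset Y}$ with $h^*\models(\Pi,\Sigma)$ and $h^*(\vset X)\ge h(\vset X)-\varepsilon$.

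\emph{Step 1: structure of the FDs.} Since $\vset X$ is a prefix, every edge of $G_\Sigma$ either stays inside $\vset X$, stays inside $\vset Y\setminus\vset X$, or goes from $\vset X$ to $\vset Y\setminus\vset X$. Now let $\vset W\rightarrow\vset U$ be an FD in $\Sigma$ with $\vset W\not\subseteq\vset X$. Then every variable of $\vset U$ has an incoming edge from a variable outside $\vset X$, so $\vset U\cap\vset X=\emptyset$. Hence every FD either has $\vset W\subseteq\vset X$ or has its head entirely outside $\vset X$.

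\emph{Step 2: the auxiliary query.} Define the projection-free query $Q_{\vset X}(\vset X)$ with atoms $R_i(\vset X_i\cap\vset X)$ and the same cardinality constraints $N_i$. Its FDs are $\Sigma_{\vset X}=\{\vset W\rightarrow \vset U\cap\vset X \mid (\vset W\rightarrow\vset U)\in\Sigma,\ \vset W\subseteq\vset X\}$. Each such FD is covered by $\vset X_i\cap\vset X$, where $R_i$ covers the original FD. The graph $G_{\Sigma_{\vset X}}$ is a subgraph of $G_\Sigma$, so $\Sigma_{\vset X}$ is acyclic. The restriction $h|_{2^{\vset X}}$ is a polymatroid with the following properties:
\begin{itemize}
\item $h(\vset X_i\cap\vset X)\le h(\vset X_i)\le\log N_i$ by monotonicity;
\item $h(\vset U\cap\vset X\mid\vset W)\le h(\vset U\mid\vset W)=0$ by monotonicity.
\end{itemize}
Therefore $h(\vset X)\le\logb_{\Gamma_{\vset X}}(\vset X,\Pi,\Sigma_{\vset X})$. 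By the result of~\cite{ngo2018worstb} for acyclic constraints without projections, this equals the entropic bound of $Q_{\vset X}$. So there is an entropic $h_0$ on $\vset X$, realized by random variables $(X)_{X\in\vset X}$, with $h_0\models(\Pi,\Sigma_{\vset X})$ and $h_0(\vset X)\ge h(\vset X)-\varepsilon$.

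\emph{Step 3: extension by constants.} Extend the random variables of $h_0$ to $\vset Y$ by letting every projected variable be a fixed constant. The resulting function $h^*(\vset A)=h_0(\vset A\cap\vset X)$ is entropic, and $h^*(\vset X)=h_0(\vset X)$. It satisfies the cardinality constraints because $h^*(\vset X_i)=h_0(\vset X_i\cap\vset X)\le\log N_i$. For the FDs we use Step~1:
\begin{itemize}
\item if $\vset W\subseteq\vset X$, then $h^*(\vset U\mid\vset W)=h_0(\vset U\cap\vset X\mid\vset W)=0$ because this FD lies in $\Sigma_{\vset X}$;
\item otherwise $\vset U\cap\vset X=\emptyset$, so $\vset U$ consists of constants and $h^*(\vset U\mid\vset W)=0$.
\end{itemize}
Taking suprema and letting $\varepsilon\to0$ gives the claim.

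The main obstacle is Step~2. There I rely on the cited tightness result of~\cite{ngo2018worstb} for acyclic constraints in the projection-free setting, and I must check that it applies to $Q_{\vset X}$ as constructed, including possibly empty atoms and relations whose arity shrank. If that citation is only available for the full output $h(\vset Y)$, this is still exactly the needed statement, since $Q_{\vset X}$ has variable set $\vset X$ and no projections. The combinatorial content of the argument is Step~1: the prefix condition forces FDs with a projected variable in the body to have heads outside $\vset X$, which is what makes the constant extension legal.
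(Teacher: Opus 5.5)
Your proof is correct and follows essentially the same route as the paper. Both arguments pass to the output restriction on $\vset X$ (the paper's $Q^r$, your $Q_{\vset X}$, which have the same FDs up to trivial ones by your Step~1) and apply the projection-free acyclic result of~\cite{ngo2018worstb}. Both then transfer back by restricting polymatroids to $\vset X$ and extending entropic functions by constant random variables, and your Step~1 is exactly the prefix argument in the paper's Claim~3. The only difference is economy: you prove just the two inequalities the theorem needs (the paper's Claims~2 and~3), rather than both equalities of Lemma~\ref{lem:tightproject}.
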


In the remainder of this section, we prove \cref{prop:prefixtight}. To this end, we introduce some more definitions.
Given that  
\(
Q(\vset X) \datarule  R_1(\vset X_1), \ldots,R_\ell (\vset X_\ell)  
\)
is  a query in variables $\vset Y$, we define its \emph{output restriction} 
\(
Q^r(\vset X)  \datarule   R^r_1(\vset X_1\cap \vset X), \ldots, R^r_\ell(\vset X_\ell\cap \vset X).
\)
\begin{remark}
The transformation of $Q$ into $Q^r$ is purely syntactical, we do not discuss the semantics of the atoms $R^r_j$.  
\end{remark}

We also define $\Pi^r$ to be the set of cardinality constraints that for every $R_i^r$ is the same as for $R_i$ in $\Pi$. Finally, $\Sigma^r$ contains for every FD $\vset W \rightarrow \vset U$ in $\Sigma$ with $\vset U\cap \vset X\ne \emptyset$
\begin{example}
	Consider again the query and the FDs from Example~\ref{ex:acyclic}.
	Since in the order given there the output variables come before all non-output variables, they are a prefix with respect to the FDs. Moreover, we have 
	\begin{align*}
	Q^r(X_1,X_2, X_3) \datarule R_1^r(X_1, X_2),R_2^r(X_1),R_3^r(X_2, X_3),
\end{align*}
and the corresponding FDs are $X_1\rightarrow X_2, X_2\rightarrow X_3$. Note that the FD $X_1Y_2\rightarrow Y_3$ does not induce any FD for $Q^r$.
\end{example}

The next lemma states that if $\vset X$ is a prefix with respect to $\Sigma$, then the bounds for $Q(\vset X), \Pi, \Sigma$ and for $Q^r(\vset X), \Pi^r, \Sigma^r$ coincide.

\begin{lemma}\label{lem:tightproject}
	Let $\vset Y$ be a set of variables, $\vset X\subseteq \vset Y$, $\Pi$ a set of cardinality constraints and $\Sigma$ a set of functional dependencies such that $\vset X$ is a prefix with respect to $\Sigma$. Then
	\begin{align*}
		\logb_{\Gamma^*_{\vset Y}}(\vset X, \Pi, \Sigma) &= \logb_{\Gamma^*_{\vset X}}(\vset X, \Pi^r, \Sigma^r),\\
		\logb_{\Gamma_{\vset Y}}(\vset X, \Pi, \Sigma) &= \logb_{\Gamma_{\vset X}}(\vset X, \Pi^r, \Sigma^r).
	\end{align*}
\end{lemma}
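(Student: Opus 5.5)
The plan is to prove both equalities at once by showing two inequalities, each for an arbitrary class $K\in\{\Gamma^*,\Gamma\}$. The key structural fact is the following. Let $\vset W\rightarrow\vset U$ be an FD in $\Sigma$ with $\vset U\cap\vset X\neq\emptyset$. Every $W\in\vset W$ has an edge in $G_\Sigma$ to some $U\in\vset U\cap\vset X$, so $W$ comes before $U$ in a topological order. In an order witnessing that $\vset X$ is a prefix, this forces $W\in\vset X$. Hence $\vset W\subseteq\vset X$. I read $\Sigma^r$ as containing $\vset W\rightarrow\vset U\cap\vset X$ for each such FD, so every FD of $\Sigma^r$ lives inside $\vset X$. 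The constraints $\Pi^r$ are $h(\vset X_i\cap\vset X)\le\log N_i$.

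\emph{Restriction: the bound for $Q$ is at most the bound for $Q^r$.} Let $h\in K_{\vset Y}$ with $h\models(\Pi,\Sigma)$, and let $g$ be the restriction of $h$ to $2^{\vset X}$. Restricting a polymatroid gives a polymatroid. Restricting an entropic function gives an entropic function, since we simply keep the random variables indexed by $\vset X$. The constraints transfer as follows.
\begin{itemize}
\item By monotonicity, $g(\vset X_i\cap\vset X)\le h(\vset X_i)\le\log N_i$.
\item For an FD $\vset W\rightarrow\vset U\cap\vset X$ of $\Sigma^r$, we have $\vset W\subseteq\vset X$. Since $h(\vset U\cap\vset X\mid\vset W)\le h(\vset U\mid\vset W)=0$ by monotonicity, the FD holds for $g$.
\end{itemize}
Since $g(\vset X)=h(\vset X)$, taking suprema gives this direction.

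\emph{Extension: the bound for $Q^r$ is at most the bound for $Q$.} Let $g\in K_{\vset X}$ with $g\models(\Pi^r,\Sigma^r)$. Define $h(\vset A):=g(\vset A\cap\vset X)$ for $\vset A\subseteq\vset Y$.
\begin{itemize}
\item \emph{Entropic case.} If $g$ is realized by random variables $(Z_V)_{V\in\vset X}$, add constant random variables $Z_V$ for $V\in\vset Y\setminus\vset X$. Constants do not change the entropy, so $h$ is entropic.
\item \emph{Polymatroid case.} Monotonicity and $h(\emptyset)=0$ are immediate. Submodularity follows because $\vset A\mapsto\vset A\cap\vset X$ preserves both unions and intersections.
\end{itemize}
Next I check the constraints for $h$.
\begin{itemize}
\item Cardinalities: $h(\vset X_i)=g(\vset X_i\cap\vset X)\le\log N_i$.
\item An FD $\vset W\rightarrow\vset U$ of $\Sigma$ with $\vset U\cap\vset X=\emptyset$: then $h(\vset U\vset W)=g(\vset W\cap\vset X)=h(\vset W)$.
\item An FD $\vset W\rightarrow\vset U$ of $\Sigma$ with $\vset U\cap\vset X\neq\emptyset$: by the structural fact, $\vset W\subseteq\vset X$. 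So $h(\vset U\mid\vset W)=g(\vset U\cap\vset X\mid\vset W)=0$ by the corresponding FD in $\Sigma^r$.
\end{itemize}
Since $h(\vset X)=g(\vset X)$, taking suprema gives the reverse inequality.

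The main obstacle is the FD check in the extension step. Without the prefix hypothesis, an FD could have body variables outside $\vset X$. Those variables receive zero entropy under $h$ and would then impose a constraint that $g$ need not satisfy. The structural fact above is exactly what rules this out, so that is the step I would write out most carefully.
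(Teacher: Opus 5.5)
Your proof is correct and follows the paper's argument: extend via $\vset A\mapsto g(\vset A\cap\vset X)$ (using constant random variables in the entropic case) for one direction, restrict for the other, and use the prefix property exactly to get $\vset W\subseteq\vset X$ whenever $\vset U\cap\vset X\neq\emptyset$. Your direct restriction of $h$ to $2^{\vset X}$ slightly streamlines the paper's Claims~3--4, which first pass through $\vset A\mapsto h(\vset A\cap\vset X)$ on $\vset Y$. Note also that the paper's $\Sigma^r$ uses the body $\vset W\cap\vset X$; under that definition the extension step needs no prefix hypothesis, and the hypothesis is really essential only in the restriction step, not where you placed the main obstacle.
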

\begin{proof}
\emph{Claim 1:}	 $\logb_{\Gamma_{\vset Y}}(\vset X, \Pi, \Sigma) \ge \logb_{\Gamma_{\vset X}}(\vset X, \Pi^r, \Sigma^r)$. 

Consider any polymatroid $h$ on $\vset X$ satisfying $\Pi^r$ and $\Sigma^r$. We will construct a polymatroid $h'$ that satisfies $\Pi$ and $\Sigma$ with $h(\vset X)= h'(\vset X)$. From this, the claim follows directly. To construct $h'$, we extend $h$ to a polymatroid on $\vset Y$ by setting for every $\vset A\subseteq Y$ the value $h'(\vset A) := h(\vset A\cap \vset X)$. Clearly, $h'$ is a polymatroid satisfying $\Pi$. We show that it also satisfies the constraints in $\Sigma$. To this end, consider an FD $\vset W\rightarrow \vset U$ from $\Sigma$. We claim that
\[
h'(\vset U\cup \vset W) = h((\vset U\cup \vset W)\cap \vset X) = h((\vset U\cap \vset X)\cup (\vset W\cap \vset X)) = h(\vset W\cap \vset X) = h'(\vset W).
\] 
In this chain of equalities only the third one is non-trivial. In the case $\vset U \cap \vset X = \emptyset$ it is straightforward; 
and in the other case it
holds because $\vset W \cap \vset X \rightarrow \vset U\cap \vset X$ is an FD in $\Sigma^r$ and thus satisfied by $h$. 
Since $h'(\vset X) = h(\vset X)$, it follows that $\logb_{\Gamma_{\vset Y}}(\vset X, \Pi, \Sigma) \ge \logb_{\Gamma_{\vset X}}(\vset X, \Pi^r, \Sigma^r)$, as claimed.

\smallskip
	 
\emph{Claim 2:} $\logb_{\Gamma^*_{\vset Y}}(\vset X, \Pi, \Sigma) \ge \logb_{\Gamma^*_{\vset X}}(\vset X, \Pi^r, \Sigma^r)$. 

To prove this claim, consider any entropic function $h$ on $\vset X$ that satisfies all constraints in $\Pi^r$ and $\Sigma^r$. We construct a new entropic function $h'$ on $\vset Y$ by adding random variables for $\vset Y\setminus \vset X$ and letting their distribution be such that they only ever take a fixed value. Then for any subset $\vset A\subseteq \vset Y\setminus \vset X$ we have $h'(\vset A) = H(\vset Y_{\vset A}) = 0$ by definition of the entropy $H$. Since $h'$ is entropic and thus a polymatroid, it follows that for every $\vset A\subseteq \vset Y$ that $h'(\vset A\cap \vset X) \le h'(\vset A) \le h'(\vset A\cap \vset X) + h'(\vset A\setminus \vset X) = h'(\vset A\cap \vset X)$, so $h'(\vset A) = h'(\vset A \cap \vset X) = h(\vset A\cap \vset X)$. Now the claim follows exactly as for the polymatroid case.
	
\smallskip
	 
\emph{Claim 3:}
$\logb_{\Gamma_{\vset Y}}(\vset X, \Pi, \Sigma) \le \logb_{\Gamma_{\vset X}}(\vset X, \Pi^r, \Sigma^r)$.

To prove this claim, we consider a polymatroid $h\in \Gamma_{\vset Y}$ satisfying the constraints in $\Pi$ and $\Sigma$. We define $h': \vset Y\rightarrow \mathbb{R}$ by setting  $h'(\vset A) = h(\vset A\cap \vset X)$ for every $\vset A\subseteq \vset Y$.
We claim that $h'$ defined this way is a polymatroid.)
Clearly, $h'(\emptyset) = 0$ and $h'$ is monotone, so it only remains to show that $h'$ is submodular. So let $\vset A, \vset B\subseteq \vset Y$. Then 
	\begin{align*}
		h'(\vset A \cap \vset B) + h'(\vset A \cup \vset B) & = h((\vset A \cap \vset B)\cap \vset X) + h((\vset A \cup \vset B)\cap \vset X)\\
		&= h((\vset A \cap \vset X)\cap (\vset B\cap \vset X)) + h((\vset A \cap \vset X) \cup (\vset B\cap \vset X))\\
		&\le h(\vset A \cap \vset X) + h(\vset B\cap \vset X)%
		= h'(\vset A) + h'(\vset B),
	\end{align*}
	which shows that $h'$ is indeed a polymatroid.
	
	We now claim that $h'$ satisfies the FDs in $\Sigma'$. So consider an FD $\vset W\cap \vset X \rightarrow \vset U\cap \vset X$ from $\Sigma'$. Assume first that $\vset W\subseteq \vset X$. Then 
	\begin{align*}
		h'(\vset W\vset U) - h'(\vset W) & = h((\vset W \cup \vset U)\cap \vset X) - h(\vset W)%
		\le h(\vset W\cup \vset U) - h(\vset W)%
		= 0,
	\end{align*}
	where the inequality holds because $h$ is monotone and the final equality it true because $h$ satisfies $\Sigma$ and thus $\vset W \rightarrow \vset U$. Now consider the case that $\vset W \nsubseteq \vset X$. Then, because $\vset X$ is a prefix, we have that $\vset U \cap \vset X = \emptyset$. It follows that 
	\begin{align*}
	h'(\vset W\vset U) - h'(\vset W) & = h((\vset W \cup \vset U)\cap \vset X) - h(\vset W\cap \vset X)%
	= h(\vset W \cap \vset X) - h(\vset W\cap \vset X)%
	= 0.
	\end{align*}
	So in any case $h'$ satisfies $\vset W\cap \vset X \rightarrow \vset U \cap \vset X$ and thus $\Sigma'$.
	
	We now define a set function $h''$ on $\vset X$ by restricting $h'$ to the ground set $\vset X$. Then $h''$ is a polymatroid that satisfies the constraints in $\Sigma'$. Moreover, we have $h''(\vset X) = h'(\vset X) = h(\vset X)$. So for every polymatroid $h$ on $\vset X$ respecting $\Sigma$, we can construct a polymatroid $h''$ on $\vset X$ respecting $\Sigma'$ such that $h''(\vset X) = h(\vset X)$, and it follows that $\logb_{\Gamma_{\vset Y}}(\vset X, \Pi, \Sigma) \le \logb_{\Gamma_{\vset X}}(\vset X, \Pi^r, \Sigma^r)$.

\smallskip
	 
\emph{Claim 4:}	
	 $\logb_{\Gamma^*_{\vset Y}}(\vset X, \Pi, \Sigma) \le \logb_{\Gamma^*_{\vset X}}(\vset X, \Pi^r, \Sigma^r)$.
	 
	 To prove this claim, we  let $h$ be an entropic function on $\vset Y$. Assume for simplicity that $\vset Y = \{1, \ldots, n\}$. Then there are random variables $\{X_1 , \ldots, X_n\}$ such that for every $\mathcal A\subseteq \vset Y$ we have $h(A) = H(X_{\vset A})$ where $X_{\vset A} = \{X_i\mid i\in \vset A\}$. Then define a set of random variables $\{X_1', \ldots, X_n'\}$ having the same distribution for the $\{X'_i\mid i\in \vset X\}$ as for $\{X_i\mid i\in \vset X\}$ and fixing all other variables to a constant value. Let $h'$ be the entropic function induced by the set $\{X_1', \ldots, X_n' \}$. Then we have for every $\vset B \subseteq \vset Y \setminus\vset X$ that $h'(\vset B) = 0$, since the corresponding random variables are constant. Since $h'$ is entropic and thus a polymatroid, it follows that for every $\vset A \subseteq \vset Y$ we have $h'(\vset A \cap \vset X) \le h'(\vset A) \le  h'(\vset A \cap \vset X) + h'(\vset A \setminus \vset X) \le h'(\vset A \cap \vset X)$ and thus $h'(\vset A) = h'(\vset A \cap \vset X)$. However, by definition of $h'$, we have $h'(\vset A\cap \vset X) = h(\vset A\cap \vset X)$ since both values are the entropy of the same random variable $X_{\vset A\cap \vset X}$. But then $h'(\vset A) = h(\vset A\cap \vset X)$. Thus, $h'$ is exactly as in 
	 Claim~3 (the case of polymatroids) but here has the additional property that it is entropic. The rest of the argument follows as the polymatroid case.
\end{proof}

To prove Theorem~\ref{prop:prefixtight}, we now use the fact that without projections the polymatroid bound and the entropic bound coincide for acyclic sets of FDs. The following is stated as Proposition~4.4~\cite{ngo2018worstb}, adapted to our slightly different notation.
\begin{proposition}\label{prop:ngo}
Let $Q(\vset X)$ be a conjunctive query without projection in variables $\vset X$, let $\Pi$ be a set of cardinality constraints and $\Sigma$ an acyclic set of functional dependencies. Then
\begin{align*}
	\logb_{\Gamma^*_{\vset X}}(\vset X, \Pi, \Sigma) = \logb_{\Gamma_{\vset X}}(\vset X, \Pi, \Sigma).
\end{align*} 
\end{proposition}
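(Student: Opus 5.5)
The plan is to show the nontrivial direction $\logb_{\Gamma_{\vset X}}(\vset X, \Pi, \Sigma) \le \logb_{\Gamma^*_{\vset X}}(\vset X, \Pi, \Sigma)$; the reverse holds because every entropic function is a polymatroid (Theorem~\ref{thm:shannon}). Fix a polymatroid $h$ on $\vset X$ that satisfies $\Pi$ and $\Sigma$. I will construct an entropic function $m$ that satisfies $\Pi$ and $\Sigma$ and has $m(\vset X) = h(\vset X)$. The natural candidate is a \emph{modular} function built from a topological order. Since $\Sigma$ is acyclic, fix a topological order $V_1 \prec \dots \prec V_n$ of $\vset X$ that extends $G_\Sigma$. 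Set $c_i := h(V_i \mid V_1\dots V_{i-1}) \ge 0$ and define $m(\vset A) := \sum_{V_i\in \vset A} c_i$.

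First, I will check the three required properties of $m$. (i) By repeated use of Observation~\ref{obs:chainrule}, $m(\vset X)=\sum_i c_i = h(\vset X)$. (ii) For a cardinality constraint on an atom $\vset X_j$, list $\vset X_j$ in the order $\prec$ and apply the chain rule to $h(\vset X_j)$. Each term is $h(V_i \mid \vset X_j \cap \{V_1,\dots,V_{i-1}\})$, and by Observation~\ref{obs:moreconditioned} this is at least $c_i$. Hence $m(\vset X_j)\le h(\vset X_j)\le \log N_j$. (iii) For an FD $\vset W\to\vset U$, every $U\in \vset U\setminus\vset W$ comes after all of $\vset W$ in the topological order. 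So, again by Observation~\ref{obs:moreconditioned}, $c_U \le h(U\mid \vset W) \le h(\vset U\mid \vset W) = 0$. Therefore every element of $\vset U\setminus \vset W$ has weight $0$, and $m(\vset U\vset W)=m(\vset W)$.

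Second, I will realize $m$ as an entropic function. Take independent random variables $Z_1,\dots,Z_n$ with $H(Z_i)=c_i$ exactly. This is possible for any nonnegative real $c_i$: the entropy of a two-point distribution varies continuously from $0$ to $1$, so one can take a suitable distribution on $2^{\lceil c_i\rceil}$ points, using a constant when $c_i=0$. By independence, the joint entropy of any subfamily is the sum of the individual entropies, so the induced entropic function is exactly $m$. The FDs also hold in the probabilistic sense, because the variables with $c_i=0$ are constant.

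The main subtlety is step (iii) together with the choice of the order. It must be a topological order of $G_\Sigma$ and not an arbitrary one, since otherwise a head variable could receive positive weight. Acyclicity is used precisely at this point. Everything else is routine chain-rule bookkeeping, and no limiting argument over the entropic cone is needed, because modular functions with arbitrary nonnegative weights are entropic.
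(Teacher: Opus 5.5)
Your proof is correct and complete. You collapse $h$ to the modular function $m$ along a topological order of $G_\Sigma$, check $m(\vset X_j)\le h(\vset X_j)$ and $m(\vset U\mid \vset W)=0$ via the chain rule and Observation~\ref{obs:moreconditioned}, and realize $m$ exactly by independent random variables. This is the standard argument that under acyclic constraints the polymatroid bound reduces to the modular bound.

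The paper gives no proof of its own here; it imports the statement as Proposition~4.4 of \cite{ngo2018worstb}, whose standard proof is, to my knowledge, this same reduction to modular functions. Your write-up therefore makes that step self-contained.
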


Proposition~\ref{prop:prefixtight} now follows easily.

\begin{proof}[Proof of Proposition~\ref{prop:prefixtight}]
	Simply combine the two equalities from Lemma~\ref{lem:tightproject} with that of Proposition~\ref{prop:ngo}.
\end{proof}

\section{Conclusion}

In this paper, we studied the relationship between the polymatroid and entropic bounds for conjunctive queries under acyclic functional dependencies. 
We showed that adding projections fundamentally change the picture: unlike the case of queries without projected variables~\cite{ngo2018worstb}, the two bounds do not coincide in general when projections are allowed. 
On the positive side, we identified a natural class of queries for which tightness is preserved, namely those whose output variables form a prefix with respect to the dependency graph. 
These results provide a better understanding of when the polymatroid bound remains tight.

\bibliographystyle{plain}

\begin{thebibliography}{10}

\bibitem{abo2016computing}
Mahmoud Abo~Khamis, Hung~Q Ngo, and Dan Suciu.
\newblock Computing join queries with functional dependencies.
\newblock In {\em Proceedings of the 35th ACM SIGMOD-SIGACT-SIGAI Symposium on
  Principles of Database Systems}, pages 327--342, 2016.

\bibitem{applegate2007exact}
David~L Applegate, William Cook, Sanjeeb Dash, and Daniel~G Espinoza.
\newblock Exact solutions to linear programming problems.
\newblock {\em Operations Research Letters}, 35(6):693--699, 2007.

\bibitem{ArenasBLMP21}
Marcelo Arenas, Pablo Barcel\'o, Leonid Libkin, Wim Martens, and Andreas
  Pieris.
\newblock {\em Database Theory}.
\newblock Open access at \url{https://github.com/pdm-book/community}, 2022.
\newblock accessed March 2026, commit 9f403e4f8bb14eccca301eda2feafe90179b98df.

\bibitem{atserias2013size}
Albert Atserias, Martin Grohe, and D{\'a}niel Marx.
\newblock Size bounds and query plans for relational joins.
\newblock {\em SIAM Journal on Computing}, 42(4):1737--1767, 2013.

\bibitem{BonifatiMT20}
Angela Bonifati, Wim Martens, and Thomas Timm.
\newblock An analytical study of large {SPARQL} query logs.
\newblock {\em {VLDB} J.}, 29(2-3):655--679, 2020.

\bibitem{CapelliIS25}
Florent Capelli, Oliver Irwin, and Sylvain Salvati.
\newblock A simple algorithm for worst case optimal join and sampling.
\newblock In Sudeepa Roy and Ahmet Kara, editors, {\em 28th International
  Conference on Database Theory, {ICDT} 2025, Barcelona, Spain, March 25-28,
  2025}, volume 328 of {\em LIPIcs}, pages 23:1--23:19. Schloss Dagstuhl -
  Leibniz-Zentrum f{\"{u}}r Informatik, 2025.

\bibitem{chan2011recent}
Terence Chan.
\newblock Recent progresses in characterising information inequalities.
\newblock {\em Entropy}, 13(2):379--401, 2011.

\bibitem{csirmaz2025exploring}
L{\'a}szl{\'o} Csirmaz.
\newblock Exploring the entropic region.
\newblock {\em arXiv preprint arXiv:2509.12439}, 2025.

\bibitem{dougherty2011non}
Randall Dougherty, Chris Freiling, and Kenneth Zeger.
\newblock Non-shannon information inequalities in four random variables.
\newblock {\em arXiv preprint arXiv:1104.3602}, 2011.

\bibitem{GogaczT17}
Tomasz Gogacz and Szymon Torunczyk.
\newblock Entropy bounds for conjunctive queries with functional dependencies.
\newblock In Michael Benedikt and Giorgio Orsi, editors, {\em 20th
  International Conference on Database Theory, {ICDT} 2017, Venice, Italy,
  March 21-24, 2017}, volume~68 of {\em LIPIcs}, pages 15:1--15:17. Schloss
  Dagstuhl - Leibniz-Zentrum f{\"{u}}r Informatik, 2017.

\bibitem{GottlobLVV12}
Georg Gottlob, Stephanie~Tien Lee, Gregory Valiant, and Paul Valiant.
\newblock Size and treewidth bounds for conjunctive queries.
\newblock {\em J. {ACM}}, 59(3):16:1--16:35, 2012.

\bibitem{GottlobLS02}
Georg Gottlob, Nicola Leone, and Francesco Scarcello.
\newblock Hypertree decompositions and tractable queries.
\newblock {\em J. Comput. Syst. Sci.}, 64(3):579--627, 2002.

\bibitem{GroheM14}
Martin Grohe and D{\'{a}}niel Marx.
\newblock Constraint solving via fractional edge covers.
\newblock {\em {ACM} Trans. Algorithms}, 11(1):4:1--4:20, 2014.

\bibitem{Khamis0S16}
Mahmoud~Abo Khamis, Hung~Q. Ngo, and Dan Suciu.
\newblock What do shannon-type inequalities, submodular width, and disjunctive
  datalog have to do with one another?
\newblock {\em CoRR}, abs/1612.02503, 2016.

\bibitem{Khamis0S17}
Mahmoud~Abo Khamis, Hung~Q. Ngo, and Dan Suciu.
\newblock What do shannon-type inequalities, submodular width, and disjunctive
  datalog have to do with one another?
\newblock In Emanuel Sallinger, Jan~Van den Bussche, and Floris Geerts,
  editors, {\em Proceedings of the 36th {ACM} {SIGMOD-SIGACT-SIGAI} Symposium
  on Principles of Database Systems, {PODS} 2017, Chicago, IL, USA, May 14-19,
  2017}, pages 429--444. {ACM}, 2017.

\bibitem{laszlo2026information}
Csirmaz Laszlo and Elod~P Csirmaz.
\newblock Information inequalities for five random variables.
\newblock {\em Computation}, 14(2):42, 2026.

\bibitem{leis2015good}
Viktor Leis, Andrey Gubichev, Atanas Mirchev, Peter Boncz, Alfons Kemper, and
  Thomas Neumann.
\newblock How good are query optimizers, really?
\newblock {\em Proceedings of the VLDB Endowment}, 9(3):204--215, 2015.

\bibitem{makarychev2002new}
Konstantin Makarychev, Yury Makarychev, Andrei Romashchenko, and Nikolai
  Vereshchagin.
\newblock A new class of non-shannon-type inequalities for entropies.
\newblock {\em Communications in Information and Systems}, 2(2):147--166, 2002.

\bibitem{matus2007infinitely}
Frantisek Matus.
\newblock Infinitely many information inequalities.
\newblock In {\em 2007 IEEE International Symposium on Information Theory},
  pages 41--44. IEEE, 2007.

\bibitem{matuvs2016entropy}
Franti{\v{s}}ek Mat{\'u}{\v{s}} and L{\'a}szlo Csirmaz.
\newblock Entropy region and convolution.
\newblock {\em IEEE Transactions on Information Theory}, 62(11):6007--6018,
  2016.

\bibitem{ngo2018worstb}
Hung~Q Ngo.
\newblock Worst-case optimal join algorithms: Techniques, results, and open
  problems.
\newblock In {\em Proceedings of the 37th ACM SIGMOD-SIGACT-SIGAI Symposium on
  Principles of Database Systems}, pages 111--124, 2018.

\bibitem{ngo2018worst}
Hung~Q Ngo, Ely Porat, Christopher R{\'e}, and Atri Rudra.
\newblock Worst-case optimal join algorithms.
\newblock {\em Journal of the ACM (JACM)}, 65(3):1--40, 2018.

\bibitem{selinger1979access}
P~Griffiths Selinger, Morton~M Astrahan, Donald~D Chamberlin, Raymond~A Lorie,
  and Thomas~G Price.
\newblock Access path selection in a relational database management system.
\newblock In {\em Proceedings of the 1979 ACM SIGMOD international conference
  on Management of data}, pages 23--34, 1979.

\bibitem{Shannon48}
Claude Shannon.
\newblock A mathematical theory of information.
\newblock {\em Bell Sys. Tech. Journal}, 1948.

\bibitem{Suciu23}
Dan Suciu.
\newblock Applications of information inequalities to database theory problems.
\newblock In {\em 38th Annual {ACM/IEEE} Symposium on Logic in Computer
  Science, {LICS} 2023, Boston, MA, USA, June 26-29, 2023}, pages 1--30.
  {IEEE}, 2023.

\bibitem{WangT26}
Ru~Wang and Yufei Tao.
\newblock Join and subgraph sampling under degree constraints.
\newblock {\em J. Comput. Syst. Sci.}, 155:103693, 2026.

\bibitem{Yeung12}
Raymond~W Yeung.
\newblock {\em A first course in information theory}.
\newblock Springer Science \& Business Media, 2012.

\bibitem{yeung2015facets}
Raymond~W Yeung.
\newblock Facets of entropy.
\newblock {\em Commun. Inf. Syst.}, 15(1):87--117, 2015.

\bibitem{yeung2003first}
RW~Yeung.
\newblock A first course in information theory.
\newblock {\em IEEE Transactions on Information Theory}, 49(7):1869, 2003.

\bibitem{ZhangMKOS25}
Haozhe Zhang, Christoph Mayer, Mahmoud~Abo Khamis, Dan Olteanu, and Dan Suciu.
\newblock Lpbound: Pessimistic cardinality estimation using
  {\(\mathscr{l}\)}\({}_{\mbox{p}}\)-norms of degree sequences.
\newblock {\em Proc. {ACM} Manag. Data}, 3(3):184:1--184:27, 2025.

\bibitem{zhang1998characterization}
Zhen Zhang and Raymond~W Yeung.
\newblock On characterization of entropy function via information inequalities.
\newblock {\em IEEE transactions on information theory}, 44(4):1440--1452,
  1998.

\end{thebibliography}

\appendix

\section{Appendix: computer-assisted proof of the main technical result}
\label{sct:appendix}

In this section we sketch an approach that allows us to verify with the help of a computer the correctness of our main technical result (the first inequality of Proposition~\ref{prop:concretebounds}, 
which in turn implies Theorem~\ref{thm:acyclicgap}). This provides a shortcut to the final result, allowing the reader to avoid carefully reading the details of the proof in Section~\ref{sct:entropicbound}. 
The approach involves extensive routine computations and is therefore impractical to carry out manually. Its main drawback is that it requires trusting the reliability of the computer hardware and software. 
Moreover, it gives the final answer without providing much insight into the structure underlying the computations.

Our goal is to prove a nontrivial upper bound on $\logb_{\Gamma^*_{\vset Y}}(\vset X, \Pi, \Sigma)$ associated with the query \eqref{eq:basequery-prime}
under the functional constraints \eqref{eq:untangled} and \eqref{eq:unaryUntangle} and the cardinality constraints \eqref{eq:cczy}.
To this end, we transform this information-theoretic problem into a problem of linear optimization, and then solve it with the standard methods of linear programming.

The query \eqref{eq:basequery-prime} involves  $n=9$ variables: $A,B,X,Y,C,A',B',X',Y'$. 
Thus, in  the quantity $\logb_{\Gamma^*_{\vset Y}}(\vset X, \Pi, \Sigma)$ we deal with $9$ jointly distributed random variables (which we denote by the same symbols $A,B,X,Y,C,A',B',X',Y'$).
For a $9$-tuple of random variables, we have $9$ entropy quantities for individual random variables  
\[
H(A),\ H(B),\ H(X),\ H(Y),\ H(C),\ H(A'),\ H(B'),\ H(X'),\ H(Y'),
\]
${9\choose2} = \frac{9\cdot 8}2$ entropy values for pairs
\[
H(AB), H(AC), \ldots, H(X'Y'),
\]
${9\choose3}$ entropy values for all triples, etc., up to the entropy of the entire $9$-tuple, 
\[
H(AA'BB'CXX'YY').
\]
In total, we obtain $2^9-1= 511$ entropy quantities. 
These quantities must satisfy the $6$ constraints from \eqref{eq:untangled} 
\begin{itemize}
\item $H(ABCXY) - H(AB) \le 0$,
\item $H(AB'CXY) - H(AXY) \le 0$,
\item $H(A'BCXY) - H(BXY) \le 0$,
\item $H(AB'X'Y'C) - H(AC) \le 0$,
\item $H(A'B'CXY') - H(XC) \le 0$,
\item $H(A'B'CX'Y) - H(YC) \le 0$,
\end{itemize}
and the $4$ constraints from \eqref{eq:unaryUntangle} 
\begin{itemize}
\item $H(AA') - H(A) \le 0$,
\item $H(BB') - H(B) \le 0$,
\item $H(XX') - H(X) \le 0$,
\item $H(YY') - H(Y) \le 0$.
\end{itemize}
The cardinality constraints \eqref{eq:cczy} 
can be reformulated as 
\begin{itemize}
\item $H(XY)  \le  3t$,
\item $H(AX)  \le  3t$,
\item $H(AY)  \le  3t$,
\item $H(BX)  \le  3t$,
\item $H(BY)  \le  3t$,
\item $H(C)  \le  2t$,
\end{itemize}
where $t$ stands for the threshold $\log N$.
Furthermore, the entropy quantities under consideration are non-negative and satisfy the axioms of polymatroids (monotonicity, subadditivity, and submodularity, also known as \emph{Shannon inequalities}).
For $n=9$ random variables, all these inequalities are implied by  
\[
n + {n\choose 2} 2^{n-2}  = 9 + \frac{9\cdot 8}2 \cdot 2^7 = 4617
\]
 \emph{elemental} Shannon inequalities, see, e.g.,  \cite{Yeung12}.
The description of the elemental inequalities is explicit, and the complete list of these inequalities can be generated mechanically by a simple computer program.

In addition to the constraints above, we add one more inequality
\begin{align*}
	11 H(A'B'X'Y'C) \le &3 H(XY) + 3H(AX) + 3H(AY) +H(BX) + H(BY) + 5H(C) \\
	& + H(ABCXY) - H(AB) \\
	& + 4H(AB'CXY)  - H(AXY)\\
	& + H(A'BCXY) - H(BXY) \\
	&+ H(AB'CX'Y') - H(AC)\\
	& + 2H(A'B'CXY')  - H(XC)\\
	&  + 2H(A'B'CX'Y) - H(YC),
\end{align*}
which is a form of the non-Shannon type inequality (\ref{eq:suciu-modified}).

Given all those constraints, we maximize the entropy $H(A'B'CXY)$.
This is essentially an optimization problem with linear constraints, i.e., a linear programming problem
(with one  optimization variable  for each of the $511$ entropy quantities).
It turns out that the optimal value of the resulting linear program is
\(
 \frac{43}{11} t.
\)
This answer can be found up to floating-point rounding errors with standard industrial LP solvers, 
or \emph{exactly} with an LP solver that 
computes rational optimal solutions
based on the simplex algorithm implemented with arbitrary-precision rational arithmetic
(see \cite{applegate2007exact}).
Although the resulting linear program is far too large to solve manually, it is can be solved easily by a computer with modern LP solvers.

\end{document}